\documentclass[12pt, english]{article}
\usepackage{lmodern}
\usepackage{geometry}
\usepackage{amsmath}
\usepackage{amsthm}
\usepackage{multirow}
\usepackage{amssymb}
\usepackage{bbm}
\newtheorem{theorem}{Theorem}
\newtheorem{proposition}{Proposition}
\usepackage{natbib}
\usepackage{graphicx}
\usepackage{tabularx,booktabs}
\usepackage{setspace}
\usepackage{caption}
\usepackage[bottom,hang,flushmargin]{footmisc}
\usepackage{datetime}
\usepackage{url}
\usepackage[most]{tcolorbox}
\newtcolorbox{defbox}{colback=gray!8, colframe=black!60, boxrule=0.5pt, arc=2pt,
  left=8pt, right=8pt, top=5pt, bottom=5pt, before skip=10pt, after skip=10pt}

\begin{document}

\title{{\Large\textbf{Diversity, Efficiency, and Incentives\\
in Political Prediction Markets}\vspace{0.25in}\thanks{We thank Kalshi, PredictIt/Aristotle Inc., Paul Rhode, Andreas Mastrosavvas, Mick Bransfield, and Dan Rosenheck (The Economist) for providing data used in the analysis. The authors are responsible for any errors.}}}

\author{\textbf{Harry Crane}\thanks{Department of Statistics, Rutgers University. Email: hcrane@stat.rutgers.edu}  \\ \small{Rutgers University}  \and \textbf{Koleman Strumpf}\thanks{Department of Economics, Wake Forest University. Email: strumpks@wfu.edu }
\\ \small{Wake Forest University}
}

\date{Working Paper\\
\monthname\ \number\year}
\maketitle

\begin{abstract}
Critics of political prediction markets often argue that prices are biased because traders are not representative of voters. In this paper, we test this claim by analyzing comprehensive data on trader demographics from political betting markets spanning more than 100 years. Our analysis yields two core findings: (1) Across the markets we studied, traders were not representative of the voters in any of them, though they were non-representative in different ways. (2) Despite the lack of representativeness among traders, market prices were not biased by the political preferences of their participants. Specifically, for the 2024 election, we analyzed anonymized trade-level data from Kalshi, a CFTC-regulated exchange, the largest regulated political prediction market to date in the United States. In 2024, Kalshi's election-market traders were 86\% male and half were under 30. Despite the skewed demographics, Kalshi outperformed leading forecasters and poll aggregators decisively. In its first month of listing election contracts, Kalshi's prices were more accurate than every poll-based forecast in the national race, the seven battleground states, and nineteen Senate races, and no poll-based forecast would have been profitable to trade against them in any group of presidential contests. Furthermore, the claim of Republican bias based on trader non-representativeness is disputed by the data, as our analysis finds that its volume came from areas that voted 28\% Republican, while 60\% of its volume went to Republican positions. Traders with just a single trade, whose behavior most closely resembles political expression, were the most biased by their inferred preferences and placed 73\% of their volume on Harris. The association between a trader's location and the direction of the trader's positions declines with trading activity and is zero among traders with more than 100 trades, who account for most of the trading volume and are primarily responsible for influencing market prices. Our analysis shows that prediction market prices can be accurate even with unrepresentative traders when the forecasts of profit-motivated traders do not depend on their demographics. Data from two other historical prediction markets corroborate our main findings. In New York's 1900 election market, bettors were older than the population and disproportionately drawn from finance. On PredictIt in 2016, which limited each trader to \$850 per contract, positions followed registered party more closely, as our theory predicts when the market restrictions do not allow profit-motivated traders to set the price. Overall, we find that prediction market efficiency is driven by financial incentives over demographic diversity or representativeness.
\end{abstract}

\bigskip
\noindent\textbf{JEL Classification}: D12, D72, D81, D91, G14

\noindent\textbf{Keywords:} prediction markets, election forecasting, diversity, wisdom of crowds, Brier score

\pagebreak{}

\begin{spacing}{1.5}

\section{Introduction}\label{section:introduction}

Through the fall of 2024, U.S. election prediction markets priced Donald Trump 5 to 10 percentage points above the leading poll-based forecasting models.  The disagreement prompted criticism from academics, the media, and legacy polling institutions. \cite{sonnenfeld2024} told readers not to trust the political prediction markets, describing them as thin, manipulable, and open only to a subsection of the population. In \textit{Fortune}, a business-law professor observed that on Polymarket ``you don't have a particularly large or diverse crowd of people,'' and a public-affairs consultant that the markets ``attract a specific kind of participant: people interested in betting who might not reflect the broader population'' \citep{lake2024}. A Benzinga analysis reasoned that because twice as many men as women hold cryptocurrency and men favored Trump, the crypto-funded Polymarket was sampling the wrong electorate, and that ``efficiency in markets relies on diversity among participants'' \citep{buckfire2024}. Nate Silver, whose model would finish the campaign at 50--50 and who was at the time a paid adviser to Polymarket, wrote that the platform's ``crypto-adjacent user base has become more Trump-leaning'' \citep{silver2024benzinga}, and a week before the election told CNBC that the markets' prices were ``calibrated based on vibe and chatter'' and ``can get a little carried away'' \citep{silver2024cnbc}. Rajiv Sethi, whose work with David Rothschild had documented that Intrade's traders were ``predominantly young and male'' and that its prices skewed toward the best-funded rather than the best-informed \citep{rothschild-sethi2016}, suggested to the \textit{Wall Street Journal} that the Trump-skewed prices on Polymarket were likely the result of market manipulation \citep{osipovich2024}, a suggestion he withdrew two weeks later once the trader's identity and motive were known \citep{sethi2024limits}. After the election, Better Markets pressed the same point, calling bettors ``mostly young males'' who ``are not and never will be a statistically valid sample of likely voters'' \citep{kelleher2024}. 

This paper tests that criticism with anonymized trade-level data from Kalshi, a CFTC-regulated exchange and the largest regulated political prediction market to date in the United States. Overall, the critics' premise is correct: Kalshi's election-market traders were 86\% male and half were under 30.  But they wrongly asserted that this lack of representativeness implied inaccuracy.  In its first month of listing election contracts, Kalshi's prices were more accurate than every poll-based forecast in the national race, the seven battleground states, and nineteen Senate races, and no poll-based forecast would have been profitable to trade against Kalshi's prices in any group of presidential contests. They also had the direction of the political bias backwards.  Although Kalshi's prices favored Trump, its traders tended to favor Harris. Kalshi's volume came from areas that voted only 28\% Republican, and the traders with a single trade, whose behavior most resembles political expression, placed 73\% of their volume on Harris. The traders who set prices, those with more than 100 trades and 93\% of volume, took positions whose direction was uncorrelated to the politics of their home zip codes. 

The above criticisms focus on attributes of the people involved in prediction market activity rather than the mechanism underlying prediction market price formation. The 2024 arguments cited above suggest that market prices are biased because market participants are not representative of the electorate. Underlying those arguments is the implicit assumption that traders take positions which reflect their political identities, so that the price aggregates the trader population's politics rather than its information. Whether the absence of demographic diversity damages prediction market prices has not been rigorously tested. We test this argument with data from several prediction markets spanning more than a century of elections. Throughout, we call a market efficient when its prices are accurate forecasts and are not systematically biased toward the candidate its traders prefer.

The markets we study are Kalshi, a CFTC-regulated exchange that listed election contracts one month before the November 2024 election, PredictIt in 2016, and the New York betting markets of 1884 to 1940 \citep{rhode-strumpf2004, rhode-strumpf2013}. We compare their prices with those of Polymarket, the Betfair betting exchange, and UK bookmakers, and with forecasts from the leading poll-based models. Section~\ref{section:data} describes the data.

Our main findings are as follows.

\textbf{First, the premise of the diversity critique is correct. Political prediction market participants are not representative of the electorate.} On Kalshi in 2024, traders were far younger and more often male than the electorate. Kalshi's traders were spread across the country in rough proportion to population. Section~\ref{section:demographics} reports the comparison. The same holds in the other two markets we study, each unrepresentative in its own way. On PredictIt in 2016, 92\% of the traders in the presidential market with a recorded sex were men and about 42\% were under 30, against an electorate that was 52\% female and 19\% under 30.\footnote{National exit poll conducted by Edison Research for the National Election Pool, November 2016. Sex is recorded for 55\% of the PredictIt traders. Age in the PredictIt data is age in July 2025, which we convert to age in November 2016.} In New York's 1900 election market, all 80 bettors we match to the Census were men, 44\% were bankers or brokers against 0.6\% of New York City's men at the time, and their median age was 41 against 22 for the city's males. 

\textbf{Second, despite the above demographic discrepancy, market prices were more accurate than every poll-based model, nationally and in the state and Senate races.} Scoring eight forecasters by daily Brier score over the common window, the four uncapped market-based forecasters, Kalshi among them, score between 0.166 and 0.185 in the national race, all better than the three poll-based models (Silver Bulletin 0.241, FiveThirtyEight and The Economist 0.252), and PredictIt, the exchange that caps each trader at \$850 per contract, falls between the two groups (0.219). In the down-ballot contests, Kalshi scores better than both poll-based models that published state forecasts in every one of the seven battleground states, better than The Economist and FiveThirtyEight in nineteen contested Senate races, and close to Polymarket in every contest group. In the nine governor races, which only the two exchanges forecast, Kalshi scores best. The exception is the House. In the twenty-four district contracts Kalshi opened six days before the election, which drew under \$250,000 in total, FiveThirtyEight's model scores 0.192 to Kalshi's 0.241, and a trader who believed the model would have profited against the market. 
Under the market-based scoring introduced by \cite{crane2018} and developed further by \cite{crane-vinson2020} based on 2018 and 2020 election markets, a trader who believed any poll-based model and traded against Kalshi's prices would have lost money in every presidential contest group and roughly broken even in the Senate races. 

\textbf{Third, trading behavior follows financial incentives, not demographics.} The selection gap itself runs the wrong way for the critics' argument. Kalshi's volume came from areas that voted only 28\% Republican, so a market that priced its traders' politics would have been biased toward Harris, not Trump. Four further pieces of evidence point the same way. (i) Traders with more than 5,000 trades account for 79\% of volume, placed 64\% of their main-contract volume long Trump, and had the highest returns of any cohort. Single-trade users placed 73\% of their volume long Harris and had the lowest return. (ii) Attributing each trader fractionally to the parties by the 2024 vote of the trader's location, so that a trader in an 80--20 Trump location counts as 0.8 Republican and 0.2 Democrat, the dollar-weighted partisan gap in trade direction across all election markets declines with trade intensity, from about 7 percentage points among casual traders to zero among traders with more than 100 trades.  (iii) Trading profit is unrelated to the demographic dimensions on which the platform is least diverse. Male and female traders are profitable at approximately the same rates.  Among traders with more than 100 trades, four-fifths of volume comes from two-sided traders, who hold positions on both sides of a contract. These traders paid prices that proved fair on average, and the direction of their positions is unrelated to the politics of their location. The platform's male majority did not tilt it toward Trump either. Men voted for Trump by 55 to 43, so a market in which traders bought the candidate they preferred would show male traders on the Republican side. Among traders with ten or fewer trades, male traders instead put a smaller share of their volume on the Republican side than female traders did, and among more active traders the two sexes do not differ. (iv) In user-level regressions, demographics and local partisanship together explain under 8\% of the variation in trading direction even with trade-intensity controls, and the within-cohort partisanship gradient falls from roughly 0.5 among casual traders to 0.08 among traders with more than 100 trades.

We extend the analysis to PredictIt's 2016 presidential market. The qualitative pattern is the same, but the preference signal is larger. Registered Democrats placed 36\% of their shares long Trump and registered Republicans 58\%. On a small, politically engaged exchange with low investment limits, the incentives mechanism is less of a factor in trading behavior. The contrast between the two markets is consistent with the incentives account. Market prices are driven by the share of volume held by profit-motivated traders, and PredictIt's investment cap prevents that share from growing large enough to set the price. \cite{zitzewitz2026} studies the price consequences of the trading cap, and historical evidence supports the same conclusion. The New York betting markets of 1884 to 1940 correctly predicted nearly every presidential election of their era \citep{rhode-strumpf2004}, despite being dominated by urban, male, financially active traders. Different from Kalshi's traders in almost every respect other than their lack of representativeness, the 1900 bettors priced the election accurately too.

These findings contribute the first large-scale empirical test of the claim that prediction market demographics bias prices, and with it the first direct confirmation of the marginal-trader hypothesis in a major election market. They also mark where the diversity argument for collective intelligence applies and where it does not. Prediction markets are not averaging mechanisms, so results on the wisdom of averaged crowds do not constrain them. Instead, our analysis suggests that financial incentives substitute for demographic breadth. The findings also bear on the policy debate over election contracts, where a central question is whether prices carry information or reflect the preferences of those who trade.

\subsection{Related literature}

The 2024 criticism borrows the authority of a familiar principle, popularized by \cite{surowiecki2004}, that diversity is required for crowds to forecast well. Surowiecki's diversity is diversity of perspectives, heuristics, and information, not demographic categories, as demographically similar individuals may hold heterogeneous information just as demographically diverse groups may think alike. Whether demographic diversity improves group performance is an empirical question with mixed answers elsewhere. Research teams that include both men and women produce more novel and more highly cited work \citep{yang2022}, while a mandated quota for women on corporate boards in Norway reduced firm value \citep{ahern-dittmar2012} and did little for women elsewhere in those firms \citep{bertrand2019}, and surveys of ethnic diversity find both costs and benefits for economic performance \citep{alesina-laferrara2005, putnam2007, hong-kacperczyk2010}. Demographic characteristics also affect individual preferences \citep{washington2008} and the speed at which groups with similar members learn \citep{golub-jackson2012}.

The no-trade theorem of \cite{milgrom-stokey1982} implies that a market of rational traders with common priors generates no trade at all, so some heterogeneity is necessary for a market to function, but the relevant heterogeneity need not be demographic. Models with noise traders \citep{kyle1985, black1986} show that unsophisticated participants improve price discovery by giving informed traders a counterparty. In these models, accuracy depends on whether financial incentives lead traders to act on their beliefs about outcomes and not on their preferences over outcomes. Demographic composition and price accuracy are, on this view, distinct objects. Evidence from macroeconomic forecasting points the same way. Partisan affiliation biases the forecasts of professional economists, whose Republican-affiliated members project higher growth, and forecast it less accurately, when a Republican holds the presidency \citep{kay2025}, whereas Kalshi's macroeconomic markets, in which every forecast carries a financial stake, predict inflation and unemployment releases at least as accurately as the Bloomberg survey consensus \citep{diercks2026}. 
\cite{manski2006} makes the aggregation point precise. A market price is not an average of participant beliefs, so results on the wisdom of averaged crowds \citep{hong-page2004, page2007} do not transfer to it. The composition question also has a direct antecedent. \cite{forsythe-nelson-neumann-wright1992} proposed the marginal-trader hypothesis to explain how the Iowa Electronic Markets produced accurate prices from a demographically narrow trader pool. Under the hypothesis, the average participant exhibits familiar judgment biases, and prices are set by a small group of active, unbiased traders at the margin \citep{oliven-rietz2004, berg2008}. 

Though the above theories originated in prior work, the hypothesis has not been tested directly on a major election market. The Iowa markets are experimental in scale and design. Each trader may invest at most \$500, the 1988 market studied by \cite{forsythe-nelson-neumann-wright1992} was open only to members of the University of Iowa, and the combined trading volume of the markets in these studies is less than 0.01\% of the volume on Kalshi's 2024 election markets. Whether the hypothesis holds in a large market open to the public has remained an open question. Concurrently, \cite{zitzewitz2026bias} studied prediction market returns from 1880 to 2025 and found no evidence that left-leaning or right-leaning candidates were overpriced, and no overpricing related to the demographic characteristics of candidates in recent elections. His data does not identify traders, so his analysis cannot identify why or how prices are unbiased, whether because traders are representative of voters or because unrepresentative traders do not trade on their politics. Here we study the traders directly, and therefore gain insight into the mechanism that drives prediction market accuracy.

 Appendix~\ref{app:model} formalizes our argument in a model of selected participation. When participation depends on voter type, the price distortion is the product of a selection gap and the degree to which traders' beliefs depend on their own type, each weighted by the dollars a trader commits. The market can be biased only if both factors are nonzero at the financially weighted margin.  We estimate these factors in Section~\ref{section:mechanism}.

The paper proceeds as follows. Section~\ref{section:data} describes the markets and the data. Section~\ref{section:accuracy} establishes the accuracy results. Section~\ref{section:demographics} documents the demographic composition of Kalshi's traders. Section~\ref{section:mechanism} presents the evidence on the incentives mechanism. Section~\ref{section:historical} places the findings in historical context, and Section~\ref{section:conclusion} concludes.

\section{Prediction Markets and Data}\label{section:data}\label{section:background}

Political prediction markets are financial exchanges on which participants trade contracts whose payoffs depend on election outcomes. The standard contract is a winner-take-all binary option paying \$1 if a specified candidate wins and \$0 otherwise, so that the price can be interpreted as the market's estimated probability of the outcome.\footnote{The mapping from price to belief depends on the distribution of beliefs and stakes \citep{manski2006, wolfers-zitzewitz2006}. Appendix~\ref{app:model} adopts the benchmark under which the price is the dollar-weighted mean belief.} Election betting has a long history in the United States. Organized markets operated openly on the curb exchanges of New York from the 1880s through the 1940s, handling large volumes and outperforming the straw polls of the era \citep{rhode-strumpf2004, rhode-strumpf2013}. The modern academic platforms, the Iowa Electronic Markets \citep{forsythe-nelson-neumann-wright1992, oliven-rietz2004} and PredictIt, operate at small scale under CFTC no-action letters yet have matched or beaten major polling aggregators \citep{wolfers-zitzewitz2004, crane2018, crane-vinson2020}. The scale of political prediction markets changed in the 2024 cycle. Polymarket, an offshore blockchain-based platform, transacted roughly \$3.6 billion on its presidential winner market \citep{gladwin2024}, and Kalshi, a CFTC-regulated exchange approved to list election contracts one month before the election, transacted more than \$250 million on its presidential winner market \citep{odonnell2024}. 

Increased scale of these markets brought increased scrutiny, as discussed in Section \ref{section:introduction}. The larger scale also allows better tests of the claims made about prediction markets, both favorable and unfavorable.  We address some of these questions here, with data compiled over a number of election cycles spanning decades of U.S.~political history. We describe the data sources below.

\paragraph{Kalshi.} Kalshi is a CFTC-regulated exchange. For this study Kalshi provided de-identified data on trading in its election markets between October 4, 2024, when it was first permitted to list election contracts, and November 5, 2024, election day. The data contains no names or other personally identifying information for any user. Trading in these markets totaled \$338 million in buy-side dollar volume. Buy-side volume grew from \$0.3 million on October 4 to \$114 million on election day, and the final five days of the campaign account for 63\% of the total (Figure~\ref{fig:volume}).

\begin{figure}[h!]
  \centering
  \includegraphics[width=0.75\textwidth]{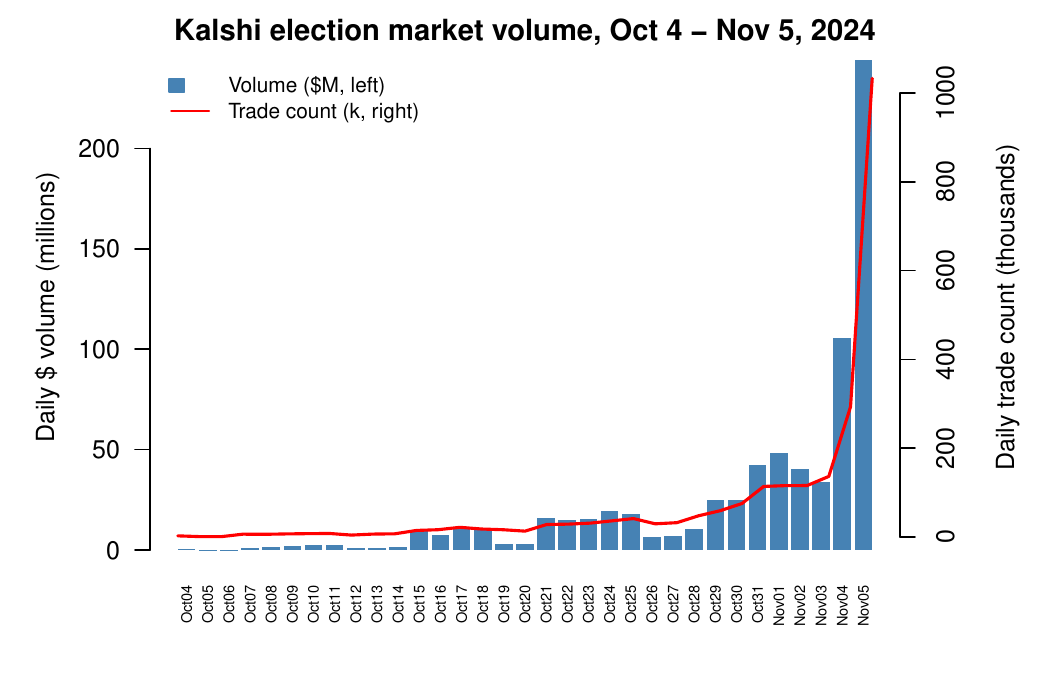}
  \caption{Daily dollar volume (bars, left axis) and trade count (line, right axis) on Kalshi's election markets, October 4 to November 5, 2024. Dollar volume here includes both buys and sells.}
  \label{fig:volume}
\end{figure}

\paragraph{Zip-code partisanship.} We measure the partisanship of each trader's home zip code in two ways. The primary measure is the two-party Republican share of the zip code's 2024 presidential vote. We construct it from the precinct-level results and precinct boundaries published by \cite{nyt2025precinct}. We overlay the precinct boundaries on a map of zip-code boundaries published by the mapping company TomTom in June 2024. Each precinct's votes are allocated to zip codes in proportion to the share of the precinct's area that lies in each zip code. The measure covers 99.2\% of users. The secondary measure is the two-party Republican share of the 2020 presidential vote in the zip code tabulation area (ZCTA), from \cite{mastrosavvas2024}. The two measures have a correlation of 0.98. Where the analysis groups zip codes, we classify them as Blue (under 40\% Republican), Lean D (40 to 50\%), Lean R (50 to 60\%), and Red (over 60\%). Registration and turnout are not observed at the zip-code level.

\paragraph{Campaign contributions.} From the DIME database \citep{bonica2023} we construct zip-code-level aggregates of individual campaign contributions in the 2024 cycle, counts and dollar amounts of donors to each party, including donors to the Trump and Harris/Biden presidential campaigns specifically. These provide a donor-based measure of zip-code partisanship, covering 99.6\% of users, and a measure of political engagement (donors per 1,000 residents). Because the Kalshi data is de-identified, individual traders cannot be matched to itemized contribution records. All contribution measures are zip-code aggregates.

\paragraph{Comparison forecasters.} We collected daily probability forecasts for the 2024 elections from three poll-based models (Nate Silver's Silver Bulletin, FiveThirtyEight, and The Economist) and daily prices from four market-based sources other than Kalshi. The market sources are Polymarket (presidential, state, Senate, governor, and House markets), PredictIt (the national race, Georgia and Wisconsin, and the Arizona and Ohio Senate races), the Betfair betting exchange, and a consensus of approximately two dozen UK sportsbooks compiled by Oddschecker.\footnote{Silver Bulletin forecasts are the daily national toplines published to subscribers, July 29 to November 5, 2024. FiveThirtyEight forecasts are from the model's published daily summary files (national and state presidential, and Senate win probabilities by race) and, for the House, from daily snapshots of its district-level win probabilities and House-control probability, October 9 to November 8. FiveThirtyEight published no governor forecast in 2024. The Economist forecasts are the national, state, and Senate time series published with its model. Polymarket prices are hourly token prices for the presidential winner market retrieved through Dune Analytics and, for the state, Senate, governor, and House markets (twenty-two district markets and House control), daily closing prices (7pm Eastern) of the Republican-wins contract from Polymarket's public order-book price history. Bookmaker and Betfair odds are daily Oddschecker snapshots of the national winner market, the 51 state winner markets, and seven Senate races. Fractional odds are converted to implied probabilities, normalized across the two major-party outcomes to remove the bookmaker margin, and the bookmaker consensus is the median across sportsbooks, excluding the Betfair and Matchbook exchanges. PredictIt prices are daily closing prices from the exchange's published price history, with the election-day open in place of the post-results close, normalized across the two major-party contracts to remove the fee-driven overround. The Economist files were received in January and July 2025, and the Silver Bulletin and FiveThirtyEight files in July and October 2025. FiveThirtyEight House snapshots and Oddschecker odds were collected daily through election day 2024. Polymarket prices were retrieved in October 2025 for the presidential winner market and in September 2026 for the other markets. The PredictIt price history was received in December 2025.}

\paragraph{PredictIt.} For external replication we use a separate proprietary user-level dataset from PredictIt's 2016 presidential market that includes registered party affiliation and verified federal donor status for each trader.

\paragraph{Historical markets.} Section~\ref{section:historical} draws on data from \cite{rhode-strumpf2004, rhode-strumpf2013} on the New York betting markets of 1884 to 1940, and on a new sample of 80 bettors in the 1900 election. The bettors were named in reports on the election betting in New York newspapers, including the \textit{New York Times}, the \textit{Wall Street Journal}, the \textit{New York World}, the \textit{New York Herald}, and the \textit{Brooklyn Eagle}. We matched the names to individual records of the 1900 Census, accessed through Ancestry.com.

\section{Forecasting Accuracy}\label{section:accuracy}

We begin with the accuracy of prediction markets relative to poll-based models in 2024. The accuracy analysis establishes the tension that the rest of the paper addresses. It is not our primary contribution.

A forecast can be judged on how close its probability was to the realized outcome (accuracy) or on whether its estimated probabilities align with empirically observed frequencies (calibration). The two measure different things. A poorly calibrated forecast can be accurate, and an inaccurate forecast can be calibrated. The Brier score used below is an accuracy measure in this sense, the squared error of a forecast probability against the realized outcome. Section~\ref{subsec:profit} applies a calibration test to individual traders.  

\subsection{Market prices and model forecasts over time}

Figure~\ref{fig:prices} plots the implied probability of a Trump victory from Polymarket, Kalshi, Silver Bulletin, FiveThirtyEight, and The Economist from July 1 through November 5, 2024. All five forecasts follow the same qualitative trajectory, but from mid-August onward the markets consistently priced Trump above the poll-based models. Figure~\ref{fig:relative} plots each forecast relative to Polymarket. With few exceptions every non-market forecast sat below the market price for the entire period.  Once Kalshi markets were listed in October, the Kalshi and Polymarket prices tracked each other closely.  The latter observation is what standard no-arbitrage reasoning predicts for markets that list similar contracts and whose prices are publicly observable. The tracking was not exact. Over the common window Polymarket's Trump price averaged 59.5\%, 2.4 percentage points above Kalshi's and one to two percentage points above the Betfair exchange and the bookmakers. During this period a single trader operating several accounts built a Trump position on Polymarket reported at more than \$30 million \citep{osipovich2024}. We do not attempt to estimate how much of the gap that position explains. 

\begin{figure}[h!]
  \centering
  \includegraphics[width=0.75\textwidth]{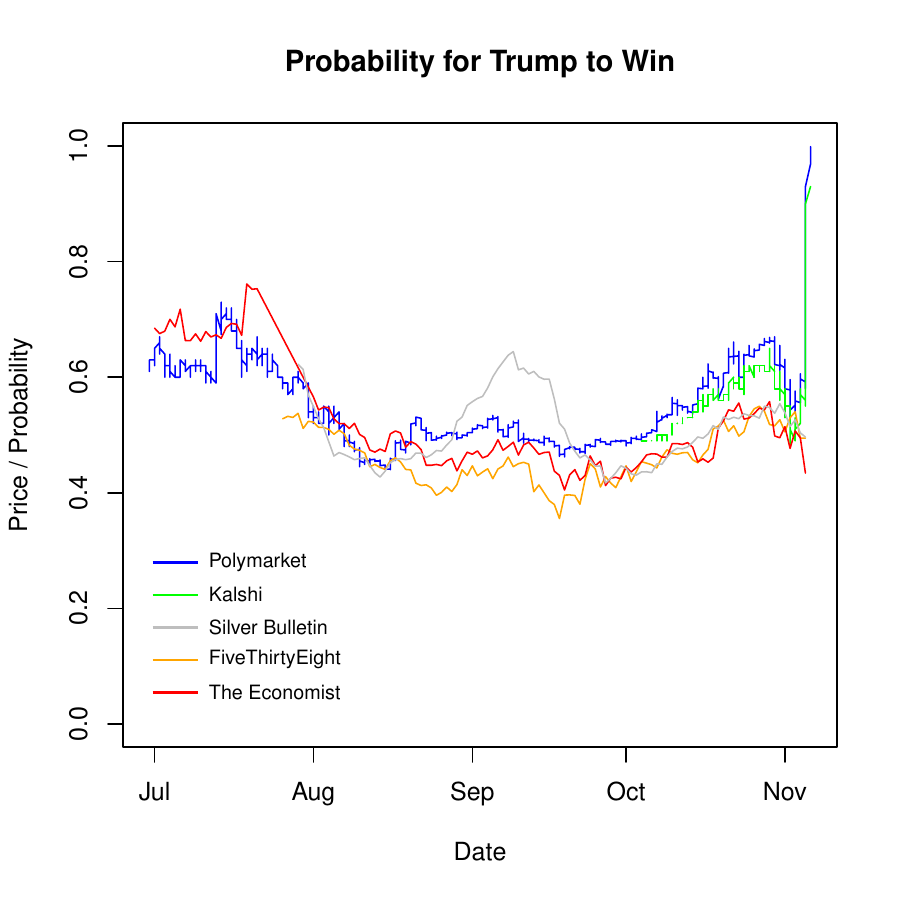}
  \caption{Probability of Trump winning the 2024 presidential election as implied by prediction market prices and poll-based forecasting models, July to November 2024.}
  \label{fig:prices}
\end{figure}

\begin{figure}[h!]
  \centering
  \includegraphics[width=0.75\textwidth]{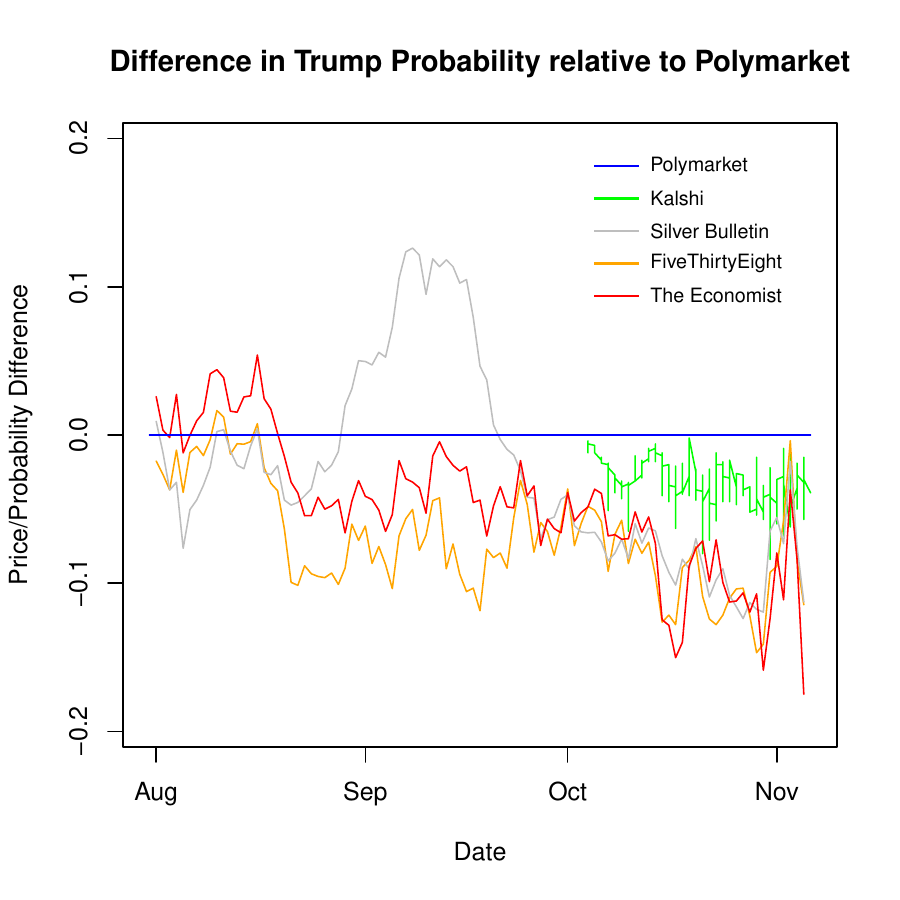}
  \caption{Difference in implied Trump win probability relative to Polymarket, August to November 2024. Values below zero indicate a lower Trump probability than Polymarket's.}
  \label{fig:relative}
\end{figure}

The poll-based models also moved in ways that no market price did. The clearest example is the hump in Silver Bulletin's forecast following the Democratic convention in late August, visible in the gray line of Figure~\ref{fig:relative}.  During this period, the Silver Bulletin model's Trump probability drifted upward for two weeks and then drifted back, a movement reflected in no market price and no competing model. Silver attributed the pattern to a model adjustment for expected post-convention polling bounces, which discounted Harris's polling averages by one to two percentage points in the weeks after the convention \citep{silver2024bounce}.  Because the polls did not move toward Harris after the convention, Silver's model interpreted flat polling results as a pro-Trump signal. Whatever the merits of a post-convention bounce adjustment in a typical election year, this episode suggests a defect in Silver's 2024 model and illustrates a structural difference between models and markets as forecasting mechanisms.  Model outputs depend on assumptions whose effects are often hidden and untested, while market prices are constantly tested and refined in real time by participants with money at stake.  In this case, even though Silver's forecast drifted in favor of Trump, the eventual winner, the forecast's temporary meander away from all other forecasts and then steady drift back toward those forecasts, while all other forecasts remained in the same cluster, indicates a model error that could be identified only in hindsight.  

\subsection{Brier score comparison}\label{subsec:brier}

We measure forecast quality with the Brier score, defined for a single forecast as $(p-o)^2$, where $p$ is the forecast probability and $o \in \{0,1\}$ is the realized outcome. Lower is better. We compute a daily Brier score for each forecaster and contest and average across days within the window in which the relevant Kalshi contracts traded.

This comparison extends the existing literature in two directions. First, we widen the set of forecasters beyond the single market-versus-model contrast that has been standard \citep{sethi2025}. For the national race we score eight forecasters, five market-based (Kalshi, Polymarket, PredictIt, the Betfair betting exchange, and a consensus of UK sportsbooks) and three poll-based (The Economist, FiveThirtyEight, and Silver Bulletin). Second, we widen the set of contests. Kalshi's election offerings included the presidential race in all 50 states, the District of Columbia, and the five district contests of Maine and Nebraska, as well as a number of congressional districts. We also pay separate attention to the seven battleground states (Arizona, Georgia, Michigan, Nevada, North Carolina, Pennsylvania, and Wisconsin), nineteen contested U.S. Senate seats, nine governor races, and the twenty-four U.S. House districts on which Kalshi opened contracts on October 31. The down-ballot contests are the more demanding test. They are decided by smaller margins, attract less forecasting effort, and were in several cases called only after election night.\footnote{We score realized outcomes against certified election results, not against the last in-sample contract price. For contests called after the close of our trade window, including the Michigan and Wisconsin Senate races, both won by Democrats, and House districts such as California's 13th, called on December 3, the final traded price does not coincide with the \$1/\$0 settlement, and using it would mislabel the outcome.}

\paragraph{National presidential race.} Table~\ref{tab:brier_summary} reports mean daily Brier scores across all forecasters for the 2024 U.S.~presidential race. Because Kalshi listed election contracts only on October 4, a like-for-like national comparison is restricted to the common window of October 7 through November 5. The window is the final month of the campaign.  During this period, the tracked forecasters separate into two groups with one forecaster between them. The four uncapped market-based forecasters score between 0.166 (Polymarket) and 0.185 (Kalshi), with the Betfair exchange (0.176) and the bookmaker consensus (0.179) between them. The three poll-based models score between 0.241 (Silver Bulletin) and 0.252 (FiveThirtyEight and The Economist). PredictIt, the academic exchange that caps each trader at \$850 per contract, scores 0.219, closer to the models than to the other markets. Its mean Trump probability over the window was 53\%, against 57 to 60\% on the uncapped markets and 50 to 51\% for the models. Every market beat every model, and the gap between the worst uncapped market and the best model (0.056) is larger than the spread within either group. The markets did not agree among themselves on the size of Trump's edge, Polymarket priced him highest and Kalshi lowest among the uncapped exchanges, but all of them sat above the models throughout the window (Figure~\ref{fig:brier_national}).\footnote{The market prices are not independent forecasts. Section~\ref{subsec:interpretation} discusses the dependence across venues. What their agreement shows is that the poll-based models, not the markets, were the outliers in 2024.}

\paragraph{Down-ballot contests.} The market advantage persists in the battleground states, the Senate races, and the governor races. Across the seven battleground states, the two exchanges score best (Polymarket 0.167, Kalshi 0.182), the Betfair exchange and the bookmakers follow (0.191 and 0.195), and the two poll-based models that published state forecasts trail (The Economist 0.213, FiveThirtyEight 0.230). Kalshi scores better than both models in every one of the seven states, and Polymarket records the best score of any forecaster in six of the seven (Table~\ref{tab:brier_states}). PredictIt listed only Georgia and Wisconsin among the battlegrounds and scores 0.193 on them against Kalshi's 0.182 on the same days. On its two Senate races, Arizona and Ohio, it scores 0.119 against Kalshi's 0.117. Across all 56 state and district contracts listed on Kalshi, the two exchanges again score best, Polymarket 0.040 and Kalshi 0.041, followed by The Economist 0.046, the bookmakers 0.047, and FiveThirtyEight 0.053. In the nineteen Senate races Kalshi listed, Kalshi and Polymarket are indistinguishable (0.090 each), FiveThirtyEight is close behind (0.094), and The Economist trails (0.108). On the seven Senate races the bookmakers priced, Kalshi scores 0.143 to their 0.156. In the nine governor races, which only the two exchanges forecast, Kalshi scores 0.100 to Polymarket's 0.108.

The House is the exception. Kalshi opened contracts on twenty-four House districts on October 31, six days before the election, and they drew \$236,000 in total, against \$47.5 million on the state presidential contracts and \$10.1 million on the Senate contracts. Over those six days FiveThirtyEight's district forecasts score 0.192 against Kalshi's 0.241, and Kalshi scores better in only four of the twenty-four districts. Polymarket, which listed sixteen of the districts, scores 0.229 against Kalshi's 0.223 on the same contest-days. On House control, a single contract that traded on 21 days of the window, the ordering is the same, FiveThirtyEight 0.233, Polymarket 0.261, Kalshi 0.264.\footnote{The House contracts are the thinnest markets in the analysis, and the six-day window is short. Kalshi's House-control contract drew \$424,000, more than all twenty-four district contracts together. The House sample is too small to separate a thin-market effect from a bad week, but the direction is the one the incentive account predicts, and the Senate contracts, with forty times the money, show no model advantage.} Figure~\ref{fig:brier_states} plots the daily battleground series and 
Figure~\ref{fig:brier_contests} the per-contest means for every state, Senate, governor, and House contest.
Kalshi's edge over the poll-based models holds in the national race, where the two main contracts drew \$168 million in buy-side volume, in the state contests (\$48 million), and in the Senate races (\$10 million). It is absent in the House districts (\$236,000) and on House control (\$424,000).

\paragraph{A profitability test.} Market-based scoring evaluates a forecaster by the performance of an investor who trades on a prediction market according to the forecaster's probabilities. \cite{crane2018} proposed it for the 2018 midterms and \cite{crane-vinson2020} developed the method more formally and applied it to the 2020 election. Crane and Vinson found that in 2020 a trader following FiveThirtyEight's forecasts would have profited against PredictIt's prices only on near-certain contracts. PredictIt's fees and position limits held the prices of those contracts below the maximum of \$1.00, so the profit reflects the exchange's rules, not better forecasting. For contests in which the result was in doubt, PredictIt outperformed the FiveThirtyEight forecast. \cite{sethi2025} adopted the same idea for 2024, arguing that the average daily Brier score can reward a model whose forecast drifts predictably toward the eventual outcome. 

We implement a simple version of these approaches here. On each contest-day, a trader who believes forecaster $f$'s probability $p_f$ on the Republican candidate to win and faces the Kalshi price $p_k$ on that outcome buys one contract on the side the forecaster favors, the Republican side at $p_k$ if $p_f > p_k$, the Democratic side at $1-p_k$ if $p_f < p_k$. Table~\ref{tab:brier_profit} reports the return per dollar staked, summed over contest-days within each group.
\begin{defbox}
No poll-based model would have profited by trading against Kalshi in any presidential contest group.
\end{defbox}
A trader who believed Silver Bulletin, FiveThirtyEight, or The Economist and traded the national race lost between 92 and 100 cents per dollar staked, and the losses in the battleground states run from 50 to 83 cents. In the Senate races and the full set of state contracts, where models and market largely agreed and the trader staked little, the returns sit near zero. The Economist loses 7 cents per dollar on the Senate races and FiveThirtyEight earns 5 cents, the same return as a trader who believed Polymarket. In the House districts a trader who believed FiveThirtyEight earned 58 cents per dollar against Kalshi, and 32 cents on House control. A trader who believed Polymarket's prices, by contrast, profited against Kalshi in the presidential contests (73 cents per dollar nationally, 49 cents in the battlegrounds) and lost in the governor races, where a trader who believed Kalshi and traded against Polymarket earned 23 cents per dollar. A trader who believed PredictIt's prices and traded against Kalshi lost 85 cents per dollar on the national race and 38 cents in Georgia and Wisconsin, and roughly broke even on the two Senate races. In the national race and the battleground states the ranking under the profitability test matches the Brier ranking, the exchanges first, the bookmakers next, the poll-based models last. The returns in Table~\ref{tab:brier_profit} are before fees and are negative for every poll-based model in every presidential group. On a capped, high-fee exchange in 2020 the leading model could not profit against the market. On an uncapped exchange in 2024 it loses outright. Where \cite{sethi2025} report that some statistical models beat the prediction market they study in less visible races, we find the same pattern in its narrowest form. No model profits against Kalshi in the presidential contests, which drew more than \$200 million, the Senate returns are within a few cents of zero, and the one model that clearly beats Kalshi does so in the House districts, where Kalshi's contracts were six days old and carried less than one dollar to every thousand traded on Kalshi's election contracts.\footnote{The rule is one of many possible. Sizing each stake in proportion to the disagreement $|p_f - p_k|$ gives the same signs and the same ordering in every cell. As with the Brier scores, contest-days are correlated.}

\begin{table}[h!]
\centering
\caption{Mean daily Brier score by forecaster and contest group, 2024. Lower is better. Bold marks the best score in each column. National scores use the common window (October 7 to November 5) over which all forecasters were active. For each down-ballot contest the window is the set of days on which the Kalshi contract traded within that period. The House contracts traded from October 31. A dash indicates that no forecast for that contest group is included.}
\label{tab:brier_summary}
\footnotesize
\setlength{\tabcolsep}{4pt}
\begin{tabular}{lcccccc}
\toprule
Forecaster & National & Battleground & All states & Senate & Governor & House \\
\midrule
Kalshi           & 0.185 & 0.182 & 0.041 & \textbf{0.090} & \textbf{0.100} & 0.241 \\
Polymarket       & \textbf{0.166} & \textbf{0.167} & \textbf{0.040} & \textbf{0.090} & 0.108 & 0.229$^{d}$ \\
PredictIt        & 0.219 & --$^{c}$ & -- & --$^{c}$ & -- & -- \\
Betfair Exchange & 0.176 & 0.191 & 0.105$^{a}$ & -- & -- & -- \\
Bookmakers       & 0.179 & 0.195 & 0.047 & 0.156$^{b}$ & -- & -- \\
Silver Bulletin  & 0.241 & --    & --    & --    & -- & -- \\
FiveThirtyEight  & 0.252 & 0.230 & 0.053 & 0.094 & -- & \textbf{0.192} \\
The Economist    & 0.252 & 0.213 & 0.046 & 0.108 & -- & -- \\
\midrule
Contests                    & 1  & 7   & 56    & 19  & 9  & 24 \\
Kalshi forecaster-days      & 30 & 210 & 1,047 & 370 & 67 & 139 \\
\bottomrule
\end{tabular}
\begin{minipage}{0.92\textwidth}
\footnotesize $^{a}$ The Betfair exchange listed 25 of the 56 state contracts, weighted toward competitive states. Kalshi scores 0.101 on the same contest-days. $^{b}$ The bookmakers priced seven Senate races. Kalshi scores 0.143 on the same contest-days. $^{c}$ PredictIt's 2024 slate included two battleground states, Georgia and Wisconsin (0.193 against Kalshi's 0.182 on the same contest-days), and two Senate races, Arizona and Ohio (0.119 against 0.117). $^{d}$ Polymarket listed sixteen of the twenty-four House districts. Kalshi scores 0.223 on the same contest-days. On House control, a single contract traded on 21 days of the window, FiveThirtyEight scores 0.233, Polymarket 0.261, and Kalshi 0.264.
\end{minipage}
\end{table}

\begin{table}[h!]
\centering
\caption{Mean daily Brier score by battleground state, October 7 to November 5, 2024. Bold marks the best score in each state. The two exchanges record the best score in every state, and Kalshi scores better than both poll-based models in every state.}
\label{tab:brier_states}
\begin{tabular}{lccccccc}
\toprule
Forecaster & AZ & GA & MI & NC & NV & PA & WI \\
\midrule
Kalshi           & 0.094 & 0.118 & 0.281 & 0.124 & 0.207 & 0.206 & 0.245 \\
Polymarket       & 0.084 & \textbf{0.108} & \textbf{0.250} & \textbf{0.118} & \textbf{0.197} & \textbf{0.181} & \textbf{0.229} \\
PredictIt        & --    & 0.125 & --    & --    & --    & --    & 0.261 \\
Betfair Exchange & \textbf{0.084} & 0.120 & 0.296 & 0.131 & 0.227 & 0.217 & 0.265 \\
Bookmakers       & 0.088 & 0.125 & 0.299 & 0.137 & 0.224 & 0.219 & 0.271 \\
The Economist    & 0.121 & 0.147 & 0.334 & 0.132 & 0.273 & 0.226 & 0.262 \\
FiveThirtyEight  & 0.143 & 0.160 & 0.313 & 0.177 & 0.266 & 0.251 & 0.299 \\
\bottomrule
\end{tabular}
\end{table}

\begin{table}[h!]
\centering
\caption{Profitability test: return per dollar staked by a trader who believes each forecaster's probability and buys one contract per contest-day against the Kalshi price, on the side the forecaster favors. Positive values mean the forecaster would have beaten Kalshi. The final row reverses the roles, with Polymarket as the market and Kalshi as the believed forecaster. A dash indicates no forecast for that contest group.}
\label{tab:brier_profit}
\footnotesize
\setlength{\tabcolsep}{4pt}
\begin{tabular}{lcccccc}
\toprule
Believed forecaster & National & Battleground & All states & Senate & Governor & House \\
\midrule
\multicolumn{7}{l}{\textit{Trading against Kalshi}} \\
Polymarket       & $+0.73$ & $+0.49$ & $+0.11$ & $+0.05$ & $-0.10$ & $+0.00^{b}$ \\
PredictIt        & $-0.85$ & $-0.38^{a}$ & -- & $+0.02^{a}$ & -- & -- \\
Betfair Exchange & $+0.51$ & $-0.36$ & $-0.13$ & --      & -- & -- \\
Bookmakers       & $+0.33$ & $-0.45$ & $-0.13$ & $-0.20$ & -- & -- \\
Silver Bulletin  & $-0.92$ & --      & --      & --      & -- & -- \\
FiveThirtyEight  & $-0.92$ & $-0.83$ & $-0.08$ & $+0.05$ & -- & $+0.58$ \\
The Economist    & $-1.00$ & $-0.50$ & $-0.01$ & $-0.07$ & -- & -- \\
\midrule
\multicolumn{7}{l}{\textit{Trading against Polymarket}} \\
Kalshi           & $-0.92$ & $-0.54$ & $-0.15$ & $-0.01$ & $+0.23$ & $+0.19^{b}$ \\
\bottomrule
\end{tabular}
\begin{minipage}{0.92\textwidth}
\footnotesize $^{a}$ PredictIt's two battleground states (Georgia and Wisconsin) and two Senate races (Arizona and Ohio) only. $^{b}$ The sixteen House districts listed on both exchanges. On House control, a FiveThirtyEight believer earned $+0.32$ per dollar against Kalshi and $+0.40$ against Polymarket.
\end{minipage}
\end{table}

\begin{figure}[h!]
  \centering
  \includegraphics[width=0.85\textwidth]{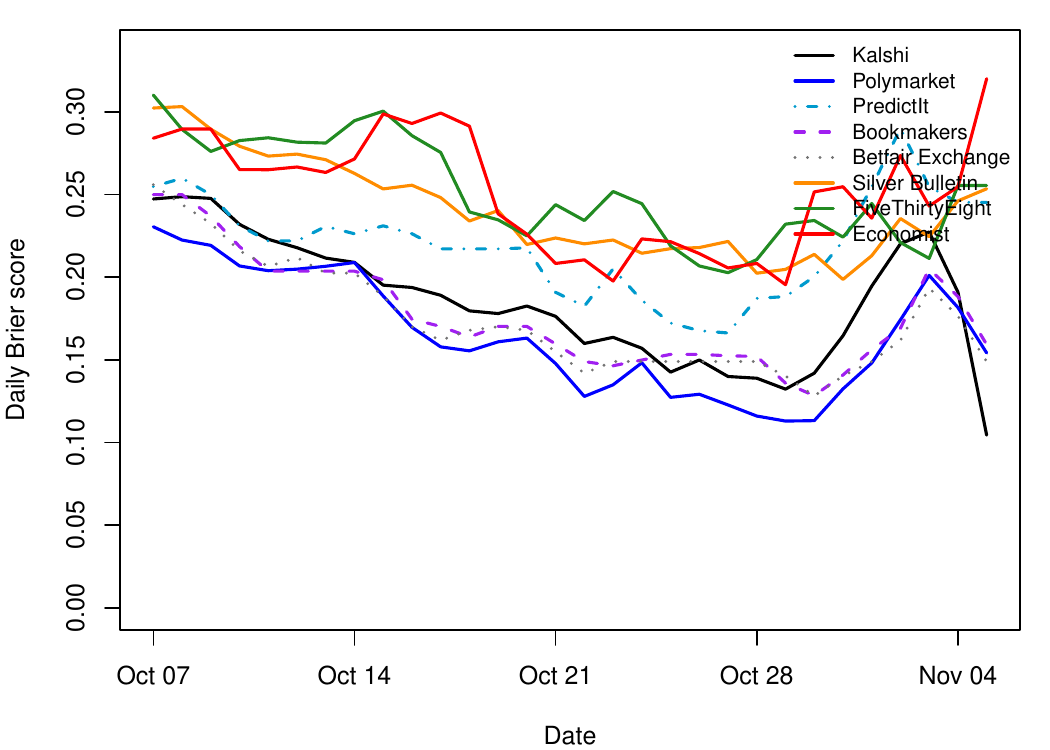}
  \caption{Daily Brier scores for the national presidential race, by forecaster, over Kalshi's listing window. Lower values indicate better accuracy.}
  \label{fig:brier_national}
\end{figure}

\begin{figure}[h!]
  \centering
  \includegraphics[width=0.95\textwidth]{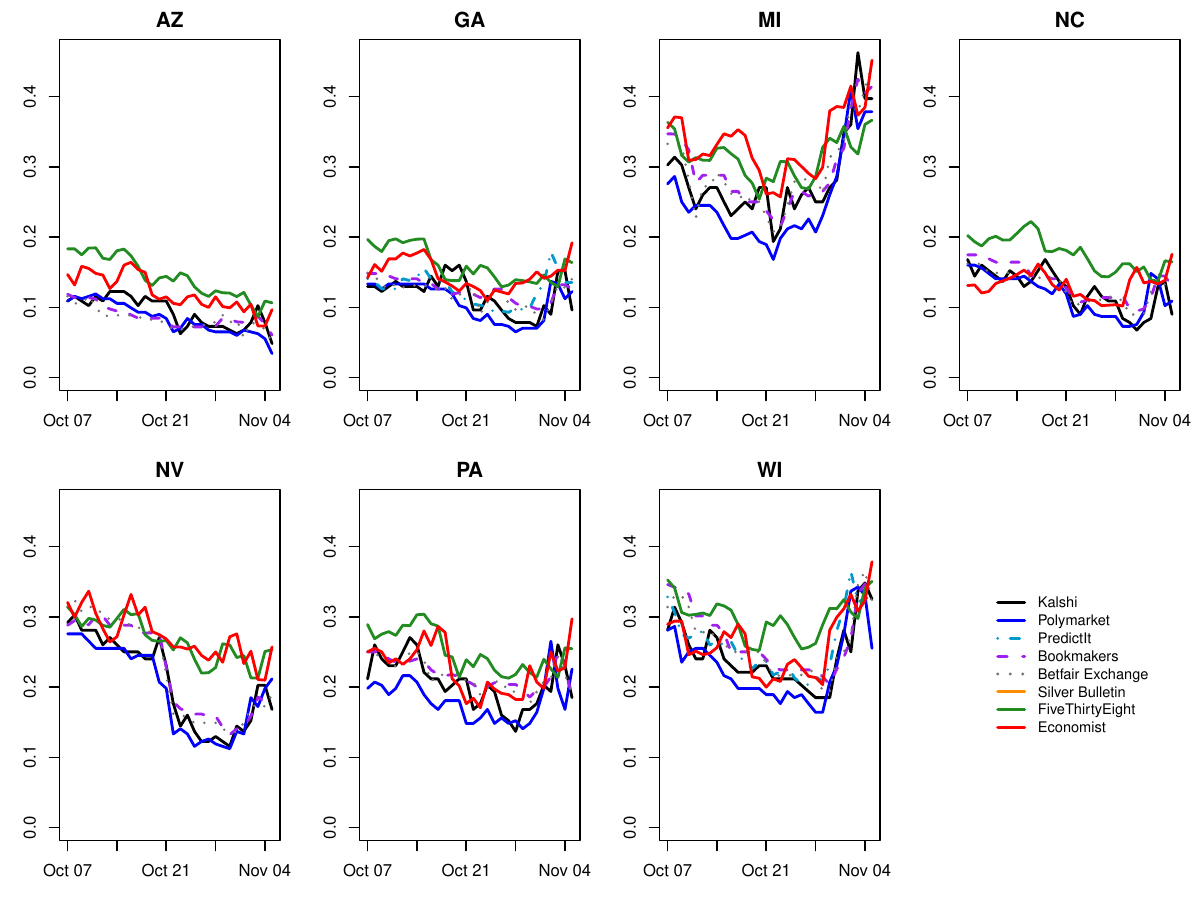}
  \caption{Daily Brier scores for the seven battleground-state presidential races, by forecaster, October 7 to November 5, 2024.}
  \label{fig:brier_states}
\end{figure}

\begin{figure}[h!]
  \centering
  \includegraphics[width=0.98\textwidth]{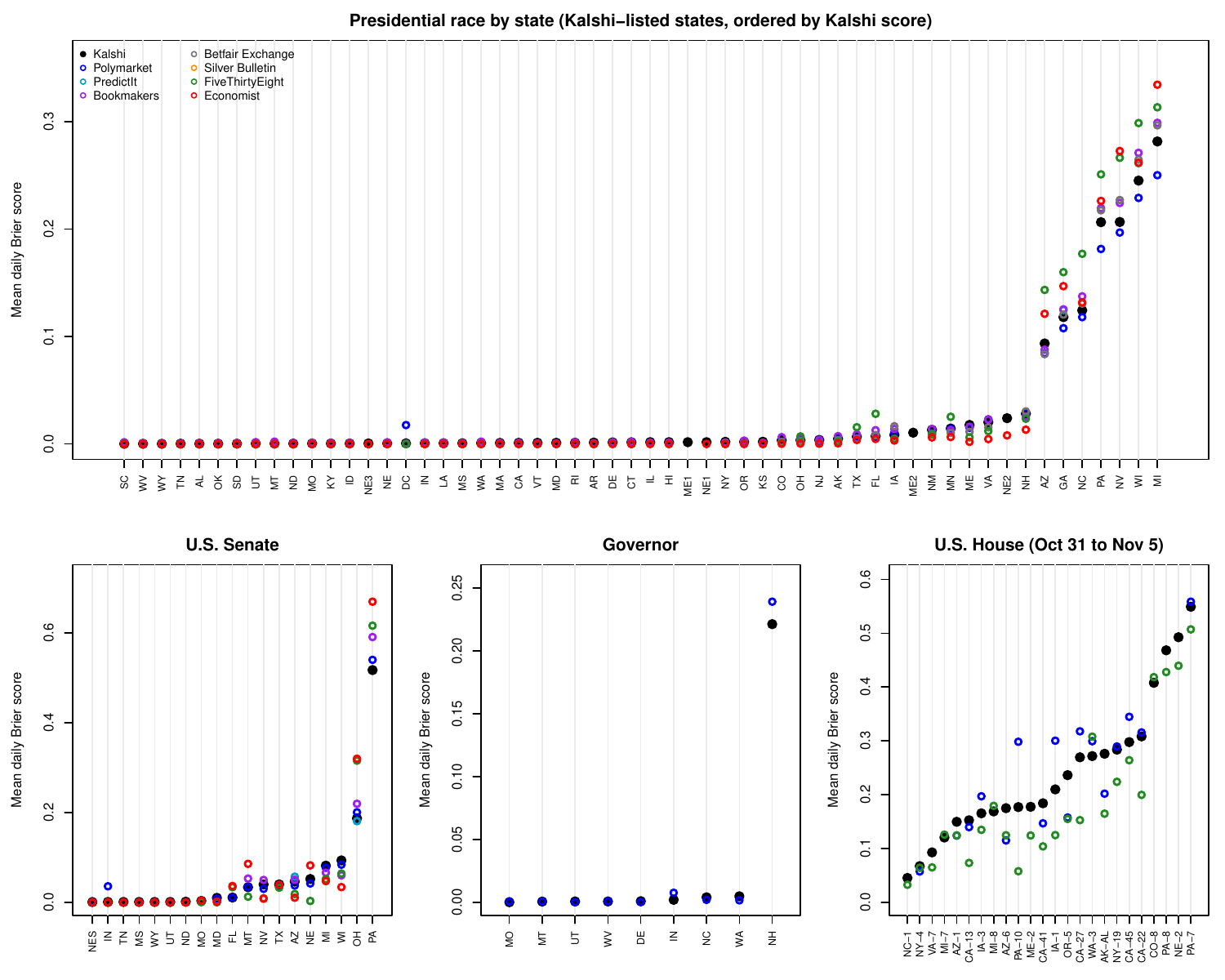}
  \caption{Mean daily Brier score by contest and forecaster, October 7 to November 5, 2024. Top: the presidential race in the 56 states and districts Kalshi listed, ordered by Kalshi's score. Bottom: nineteen Senate races, nine governor races, and the twenty-four House districts Kalshi listed from October 31. Filled black points are Kalshi. Open points are the other forecasters. Most contests were near-certain and every forecaster scores near zero on them. The forecasters separate only in the competitive contests at the right of each panel.}
  \label{fig:brier_contests}
\end{figure}

The pattern across contests is consistent.
\begin{defbox}
Every market-based forecaster was more accurate than every poll-based model in the national race, and Kalshi was more accurate than every poll-based model in the state and Senate races.
\end{defbox}
Markets on House of Representatives districts, which only ran for six days on Kalshi, are an exception. In those markets, low volume reduced the economic incentives, and accordingly the accuracy of those markets suffered. The puzzle that the rest of the paper addresses is that an exchange which outperformed every poll-based model in the national, state, and Senate races was, as we show next, composed of traders who look nothing like the voting population.

\section{Kalshi's Traders Are Not Representative}\label{section:demographics}

The pro-Trump lean of the markets relative to the models admits two explanations, superior forecasting or the political preferences of an unrepresentative trader base. Deciding between them begins with establishing how unrepresentative the trader base was. This section documents the composition of Kalshi's traders against the citizen voting-age population (CVAP) from the American Community Survey.

\paragraph{Sex.} Approximately 53\% of 2024 voters were female according to national exit polls.\footnote{National exit poll conducted by Edison Research for the National Election Pool, November 2024, as published by NBC News.} On Kalshi, 85.8\% of election-market traders were male, and male traders accounted for 96.9\% of dollar volume. Table~\ref{tab:sex_age} reports the joint distribution of traders by sex and age.

\begin{table}[ht]
\centering
\caption{Kalshi traders by age and sex (\% of traders with imputed sex).}
\label{tab:sex_age}
\begin{tabular}{lrr}
\toprule
Age & Male (\%) & Female (\%) \\
\midrule
18--29 & 44.88 & 5.74 \\
30--34 & 13.07 & 2.23 \\
35--39 & 10.26 & 1.83 \\
40--49 & 11.10 & 2.31 \\
50--59 &  4.50 & 1.38 \\
60+   &  2.02 & 0.65 \\
\midrule
Total & 85.83 & 14.14 \\
\bottomrule
\end{tabular}
\end{table}

\paragraph{Age.} Table~\ref{tab:age_dist} reports the age distribution of traders alongside the CVAP. Over half of Kalshi traders (51.5\%) were under 30, against 19.6\% of the CVAP. Only 7.9\% of traders were 50 or older, against 55.7\% of the CVAP aged 45 and up. Weighted by dollar volume the skew is less extreme but still pronounced, with traders under 30 accounting for 42.1\% of volume.

\begin{table}[ht]
\centering
\caption{Age distribution of Kalshi traders and of the citizen voting-age population (CVAP, Current Population Survey, November 2024). The CVAP tabulation uses different age groups, shown in the last two columns.}
\label{tab:age_dist}
\begin{tabular}{lrrcr}
\toprule
Age & \% of Kalshi traders & \% of Kalshi volume & CVAP age & CVAP share (\%) \\
\midrule
18--29 & 51.5 & 42.1 & 18--29 & 19.6 \\
30--34 & 15.5 & 23.6 & 30--44 & 24.7 \\
35--39 & 12.0 & 10.9 & 45--64 & 31.1 \\
40--49 & 13.1 & 16.6 & 65 and over & 24.6 \\
50--59 &  5.5 &  5.8 & & \\
60 and over &  2.4 &  1.1 & & \\
\bottomrule
\end{tabular}
\end{table}

\begin{figure}[h!]
  \centering
  \includegraphics[width=0.8\textwidth]{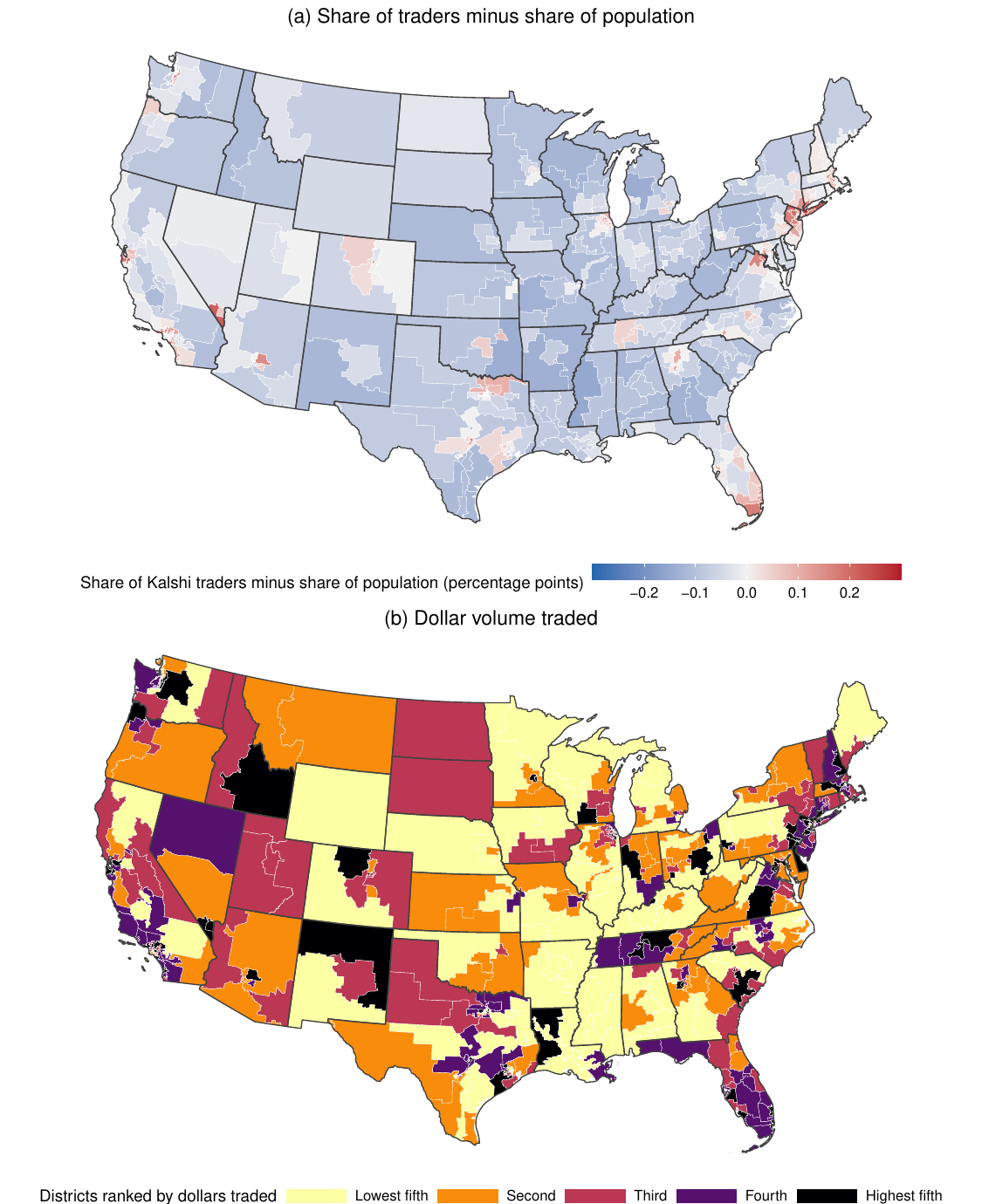}
  \caption{Kalshi election-market participation by congressional district, October 4 to November 5, 2024. Zip-code totals are summed to the districts of the 119th Congress, the districts contested in November 2024, with each zip code assigned to the district that contains its Census internal point, and district populations are summed from the same zip codes. The map shows the 432 districts of the contiguous states and the District of Columbia. Panel (a) shows each district's share of all Kalshi traders minus its share of the population, in percentage points, with the color scale clipped at 0.3 percentage points in either direction. Panel (b) ranks districts by dollar volume, buys and sells combined, in fifths. Volume is concentrated in a small number of urban districts that are too small to see at this scale, and the text reports the concentration.}
  \label{fig:zip_map}
\end{figure}

\paragraph{Geography.} Figure~\ref{fig:zip_map} maps Kalshi election-market participation by congressional district. Traders are disproportionately located in cities relative to the general population, and trade volume is even more concentrated. Panel (a) of Figure~\ref{fig:zip_map} puts a number on the first claim. A district's share of Kalshi's traders differs from its share of the population by between $-0.15$ and $+0.96$ percentage points, and moving 17.5\% of traders across districts would make the two shares match everywhere. Traders are over-represented in the urban districts of New York City, San Francisco, Chicago, and northern Virginia, and under-represented across the rural South and Midwest. The top 1\% of zip codes by volume account for 9.6\% of traders and 76.7\% of volume, and the top decile for 43\% of traders and 92\% of volume. Participation is also geographically unrepresentative. Among zip codes with at least 500 residents and 50 two-party votes, 31\% have no Kalshi trader. These zip codes hold 5\% of the population and cast 5\% of the votes, and they voted 68\% Republican. Among zip codes with traders, the population-weighted Republican vote share falls monotonically from 55\% in the bottom quintile of traders per 1,000 residents to 40\% in the top quintile (Table~\ref{tab:zip_intensity}). High-participation zip codes are also the politically engaged ones. Campaign donors per 1,000 residents more than double across the same quintiles, and in a regression of log traders per 1,000 residents on zip-code characteristics the elasticity with respect to donor density is 0.40, with the Republican vote share entering negatively ($-0.45$). Trader-level volume shows the same tilt. Users from Blue areas traded about ten times as much per user as users from Red zip codes, a difference that reflects the concentration of heavy traders in Democrat-dominated locations. Table~\ref{tab:zip_intensity} shows that the highest-participation quintile of zip codes nonetheless committed the largest share of its volume to Republican positions of any group, 62\%, a first indication that where traders come from does not dictate how they trade.

\begin{table}[h!]
\centering
\caption{Zip-code characteristics by quintile of Kalshi participation. Zip codes with at least 500 residents and 50 two-party votes. Quintiles of traders per 1,000 residents among zip codes with at least one trader, with the 31\% of zip codes that have no trader shown separately. Vote share is the population-weighted 2024 two-party Republican share. Donors and contributions are 2024-cycle individual contributors and dollars from DIME. The last column is the share of the quintile's classified election-market dollars committed long Republican.}
\label{tab:zip_intensity}
\begin{tabular}{lcccccc}
\toprule
 & Traders & Kalshi \$ & Vote share & Donors & Contributions & Long \\
Zip codes & per 1,000 & per capita & Republican & per 1,000 & \$ per capita & Republican \\
\midrule
No traders                  & 0    & 0     & 68\% & 29 & 22  & --   \\
Quintile 1 (lowest)         & 0.20 & 0.69  & 55\% & 27 & 18  & 53\% \\
Quintile 2                  & 0.37 & 1.30  & 52\% & 32 & 34  & 58\% \\
Quintile 3                  & 0.52 & 2.26  & 50\% & 39 & 57  & 59\% \\
Quintile 4                  & 0.72 & 5.43  & 46\% & 46 & 50  & 55\% \\
Quintile 5 (highest)        & 1.36 & 42.85 & 40\% & 63 & 178 & 62\% \\
\bottomrule
\end{tabular}
\end{table}

The geographic skew can be summarized in one number, the difference between the average Republican vote share of traders' home zip codes and the average for the population. We call this difference the selection gap on the geographic dimension. It is one of the two quantities that determine how far an unrepresentative trader pool can move the price in the model of Appendix~\ref{app:model}, which Section~\ref{subsec:framework} summarizes. Weighting zip codes by residents, the average is 49.3\% Republican. Weighting by traders it is 44.8\%, a selection gap of 4.5 percentage points. Weighting by volume it is 27.7\%, a selection gap of 21.6 percentage points. For traders with more than 100 trades, who account for 93\% of volume, the volume-weighted selection gap is 22.6 percentage points.
\begin{defbox}
Kalshi's 2024 election volume came from communities that voted nearly three to one against the candidate the market favored.
\end{defbox} 

On sex and age the gap points the other way, toward the winner. Male traders supplied 97\% of volume and half of traders were under 30. Men voted 55 to 43 for Trump, so an expressive market with this composition would have tilted toward Trump. Section~\ref{subsec:direction} shows that among traders with ten or fewer trades male traders put a smaller share of their volume on the Republican side than female traders did, and that among more active traders the two sexes do not differ, so this channel is closed as well.

These results confirm the premise of the critique. 
\begin{defbox}
In 2024, prediction market traders were substantially younger and more male than the voting population, and their trading volume came disproportionately from Democratic-voting, donor-dense communities.
\end{defbox}
The typical Kalshi trader is unrepresentative of the voting public in several respects, and the market's prices show no obvious signs of bias in the direction of traders' demographics. We do not have comparable data for Polymarket, though as an offshore exchange that is technically closed to U.S. customers its trader base is likely at least as skewed, consisting of a mix of non-U.S.~users (who are not voters) and U.S.~users who are avid crypto users. \cite{crane2026offshore} estimates that U.S. users accounted for about 30\% of Polymarket's trading volume in the twelve months ending April 2026, with a range of 19\% to 48\%, suggesting that most of Polymarket's volume comes from traders who could not vote in the 2024 election. Setting politics aside, Polymarket's trader base is on this dimension the least representative of the electorate of any market we consider, yet its prices for the 2024 election were consistently the most accurate based on the metrics studied here.

Paired with the accuracy results of Section~\ref{section:accuracy}, the composition results establish the paper's central tension. Demographic representativeness is not necessary for forecasting accuracy. The trader pool that beat every poll-based model in the national, state, and Senate races looks nothing like the electorate whose behavior it was forecasting. The remainder of the paper explains why.

\section{Incentives, Not Demographics}\label{section:mechanism}

Our proposed resolution distinguishes between who the traders are and how they behave. A prediction market trader who prefers Harris but believes Trump will win has a financial incentive to buy Trump contracts. A poll respondent asked for their preference faces no such distinction. If incentives govern behavior, the dollar-weighted actions of traders should converge on the market's belief about the outcome regardless of the traders' politics. If preferences govern behavior, trading is expressive and traders' positions should track their politics. Traders from Republican communities should buy Trump (on average), traders from Democratic communities should buy Harris (on average), and the price should aggregate the trader population's political composition.

We test these two accounts against each other in this section.  In each test, we split traders by the number of trades they placed, since low-frequency, low-volume traders are the cohort that is least affected by the market's profit incentives, and high-frequency, high-volume traders are the most. The two accounts make opposite predictions.
\begin{enumerate}
\item If preferences govern trading, the partisan pattern should be about the same in every cohort. Republican-aligned traders bet on Trump and other Republican candidates disproportionately more than Democrat-aligned traders, regardless of their trading frequency.
\item If incentives govern trading, the partisan pattern should be concentrated among casual traders and absent among the heavy traders, who are most directly affected by the incentives of the market.
\end{enumerate}

The evidence supports the incentives account.
\begin{defbox}
Traders with more than 100 trades account for 93\% of volume in Kalshi's 2024 election markets. Weighted by dollar volume, the direction of trades from this group is uncorrelated with the partisanship of their home zip code, sex, or age.
\end{defbox}
Under the framework of Section~\ref{subsec:framework}, this is the condition for an undistorted price. The price is set by the volume of traders whose positions do not follow their politics, so the composition of the trader pool, however unrepresentative, does not reach the price. The partisan pattern exists, but only among the casual traders who account for the remaining 7\% of volume. Among traders with ten or fewer trades, a 10-percentage-point increase in the home zip code's Republican vote share is associated with a 5-percentage-point increase in the share of volume placed on Republican positions, and among traders with 11 to 100 trades with a 3-percentage-point increase. 

\subsection{Framework}\label{subsec:framework}

Appendix~\ref{app:model} sets out a model of selected participation that organizes the tests. We summarize its objects here. Each voter has a latent type $v$ that affects the vote and may also affect participation in the market, so that the distribution of types among traders, $F_T$, differs from the distribution among voters, $F_V$. The \emph{selection gap} $\Delta_T = E[v \mid \text{trader}] - E[v]$ measures the difference. The market price is a weighted average of trader beliefs $q_i$ about the outcome, with weights $\lambda_i$ proportional to the dollars each trader commits, so the distribution that matters for the price is not $F_T$ but the dollar-weighted distribution $F_{\$}$, and the financially weighted selection gap $\Delta_{\$}$ is the relevant one. The price error is the financially weighted average belief error,
\[
B \;=\; \frac{\sum_i \lambda_i \,(q_i - q^*)}{\sum_i \lambda_i},
\]
where $q^*$ is the true conditional probability of the outcome. Representativeness enters nowhere in this expression. The price is unbiased if and only if $B = 0$, whatever $F_T$ looks like (Theorem~\ref{thm:prob}).

The appendix considers four accounts of how traders form the beliefs $q_i$, each a separate behavioral model.
\begin{enumerate}
\item Forecasting from own type. A trader forecasts from the trader's own type, and the price is distorted by the covariance between participation and the belief that type induces (Proposition~\ref{prop:owntype}).
\item Correct beliefs. Every trader holds the correct belief, and the price is correct whatever the composition of the trader pool (Proposition~\ref{prop:correct}).
\item Inference from the trader population. Traders take the trader population for the electorate, and the price is distorted to the extent that the two populations imply different vote shares (Proposition~\ref{prop:pool}).
\item Own-type projection. A trader perceives the electorate as shifted toward the trader's own type by a factor $\rho$ (Proposition~\ref{prop:projection}).
\end{enumerate}
Expressive trading is observationally a case of the first account, since the side taken depends on the trader's type alone.

The diversity critique is a claim about how $B$ is generated. If traders forecast from their own type, or perceive the electorate through the lens of their own type, a trader's belief depends on $v_i$ with some strength $\rho$, and the price distortion is, to first order, the product $\rho\,\Delta_{\$}$ scaled by the sensitivity of the win probability to the perceived electorate (Proposition~\ref{prop:projection}). The critique observes that $\Delta_T$ is large and concludes that $B$ is large. The conclusion requires two further things the critique does not establish, that $\rho$ is nonzero for the traders whose volume carries weight, and that $\Delta_{\$}$ has the sign the critique asserts. Financial incentives bear on the first. A trader who buys the preferred outcome regardless of the price has $\rho > 0$ and loses money on average. Repeat play is the process that punishes such a trader and rewards the trader with $\rho = 0$, so the incentives account predicts that $\rho$ falls with trade intensity and is zero where volume concentrates.

The sections that follow estimate each factor. Section~\ref{section:demographics} has shown that $\Delta_T$ is large on every demographic dimension, and that $\Delta_{\$}$ on the geographic dimension has the opposite sign from the one the critique assumes. Sections~\ref{subsec:geo} and \ref{subsec:lpm} estimate $\rho$ by trade intensity, from the partisan gap in trade direction under fractional attribution and from within-cohort regression slopes. Section~\ref{subsec:profit} reports an observable counterpart of $B$ for each cohort. The calibration score compares the prices a cohort paid with how often its positions won. The incentives account predicts that $\rho$ and $B$ both go to zero in the cohort that sets prices. The preference account predicts that neither does.

\subsection{Trader cohorts}\label{subsec:cohort}

We partition traders into five cohorts by trade count, exactly one trade (``=1''), 2--10, 11--100, 101--5,000, and more than 5,000. A single trade looks like expression, a vote cast with money. A user placing thousands of trades is monitoring the market continuously or trading by algorithm. Table~\ref{tab:cohort_summary} summarizes the cohorts.

\begin{table}[ht]
\centering
\caption{Kalshi trader cohorts by trade count: composition, behavior, and outcomes.}
\label{tab:cohort_summary}
\small
\begin{tabular}{lrrrrr}
\toprule
Cohort & \% of users & \% of volume & \% male & \% under 30 & ROI (\%) \\
\midrule
=1            & 13.0 &  0.1 & 79.4 & 47.5 & -55.9 \\
2--10         & 57.4 &  1.9 & 85.3 & 53.2 & -9.6  \\
11--100       & 24.3 &  4.7 & 89.2 & 51.4 & -9.5  \\
101--5{,}000  &  5.1 & 14.6 & 92.1 & 43.8 & -1.6  \\
\textgreater 5{,}000 &  0.2 & 78.7 & 95.5 & 32.9 & +1.5  \\
\bottomrule
\end{tabular}
\\[0.25em]
\footnotesize ROI is total PnL net of fees divided by total dollar volume traded by the cohort. Volume shares are of gross dollar volume across all election markets.
\end{table}

Activity is highly concentrated. Traders with more than 5,000 trades are 0.2\% of users and 78.7\% of dollar volume. ROI also rises monotonically with trade count, from the one-trade cohort to the heavy cohort. This is the pattern expected in a market that mixes expressive traders with traders who trade for profit.

The concentration of volume also explains how a market can price accurately despite widespread bias among its participants. Every trade moves the price toward the side traded, and large trades move it more. Large trades come overwhelmingly from the most active cohort, which also had the highest returns. The trades that move the price most therefore came from the traders whose positions proved most accurate on average. The rest of this section shows that this price-setting cohort is the one whose behavior shows no partisan pattern.

\subsection{Direction of trades across cohorts}\label{subsec:direction}

The cohorts differ most in which side they took. We define a position as long Trump if it is a buy of the yes side of the Trump contract (PRES-2024-DJT) or the no side of the Harris contract (PRES-2024-KH), and long Harris for the opposite positions.\footnote{Direction is measured from buy-side flows throughout. Every contract has a yes and a no side, so a position against a candidate appears as a buy of the opposite side. Sells close existing positions and enter the analysis through profit and loss. Buy-side gross flow is the natural measure of where traders committed new money.} Table~\ref{tab:direction_cohort} reports each cohort's dollar-weighted split on the two main presidential contracts.

\begin{table}[ht]
\centering
\caption{Direction of trades on the main presidential market by cohort, dollar-weighted.}
\label{tab:direction_cohort}
\begin{tabular}{lrrr}
\toprule
Cohort & \% long Trump & \% long Harris & Volume (\$M) \\
\midrule
=1                  & 27.4 & 72.6 &   3.4 \\
2--10               & 56.2 & 43.8 &  34.4 \\
11--100             & 58.7 & 41.3 &  38.7 \\
101--5{,}000        & 68.2 & 31.8 &  33.0 \\
\textgreater 5{,}000 & 63.5 & 36.5 &  59.1 \\
\bottomrule
\end{tabular}
\end{table}

One-trade users, whose behavior most closely resembles a single expressive vote, committed 72.6\% of their volume long Harris, even though Harris trailed in market prices throughout the sample and lost the election. The direction reverses as trade intensity rises. The most active traders were 64\% long Trump.

The casual tilt toward Harris has two possible explanations. One is expression, a single trade on the preferred candidate, a vote cast with money. The other is the favorite-longshot bias documented in betting markets for over a century, in which casual participants overpay for low-probability, high-payout positions \citep{snowberg-wolfers2010}. Harris contracts traded near 40 cents for most of the sample, and the one-trade cohort bought, across the two main contracts, at prices implying a 43.8\% win rate (Section~\ref{subsec:profit}). Section~\ref{subsec:geo} shows the casual Harris tilt is platform-wide rather than concentrated in Democratic areas, which is at least as consistent with the longshot explanation as with the expressive one. For our question the two explanations are equivalent. Both are motives unrelated to forecasting, neither depends on the trader's demographic category, and the incentives account requires only that such motives, whatever their mix, be concentrated among casual traders and absent at the price-setting margin, which is what the data shows.

The pattern is not specific to the presidential contract. We classify the yes side of every election market as Republican-aligned or Democrat-aligned from the contract definition (the classification, described in Section~\ref{subsec:geo}, covers 92.6\% of buy-side dollars) and count a no-side buy toward the opposite party. A position aligned with a party pays if that party wins, and we label such positions \emph{long Republican} and \emph{long Democrat}, extending the long-Trump terminology to the full election complex. Table~\ref{tab:direction_cohort_all} reports each cohort's split across the full election complex, state and popular-vote contracts, Senate seats, chamber control, and margin markets as well as the presidency. The gradient is the same. One-trade users placed 72.3\% of their election-market volume on Democrat-aligned positions. The long-Republican share rises monotonically with trade intensity to 65.9\% for the heaviest cohort. The expressive cohort took the losing side across the entire election complex.

\begin{table}[ht]
\centering
\caption{Direction of trades across all party-classifiable election markets by cohort, dollar-weighted.}
\label{tab:direction_cohort_all}
\begin{tabular}{lrrr}
\toprule
Cohort & \% long Republican & \% long Democrat & Volume (\$M) \\
\midrule
=1                  & 27.7 & 72.3 &   3.8 \\
2--10               & 53.4 & 46.6 &  42.0 \\
11--100             & 53.5 & 46.5 &  69.0 \\
101--5{,}000        & 60.7 & 39.3 &  76.7 \\
\textgreater 5{,}000 & 65.9 & 34.1 & 122.1 \\
\bottomrule
\end{tabular}
\end{table}

The cohort whose behavior most resembles political expression was the cohort most heavily aligned with the loser. Heavy traders, whose behavior cannot plausibly be described as expression, were aligned with the winner. The dollar-weighted aggregate is also dominated by the heavy cohort. Even if every casual trader had traded purely expressively, the market price would still have been determined by the small number of high-volume traders whose behavior is consistent with profit-seeking.

\paragraph{Sex and age.} The dimensions on which Kalshi is least representative are sex and age, and in 2024 both pointed toward Trump. Men voted 55 to 43 for Trump and women 53 to 45 for Harris, and men under 30 split evenly while women under 30 voted 61 to 38 for Harris.\footnote{National exit poll conducted by Edison Research for the National Election Pool, as published by NBC News. The sex-by-age cells used for the benchmark are, for men, 18--29 (49--48 Trump), 30--44 (52--45), 45--64 (59--39), and 65 and over (56--43). For women they are 18--29 (38--61), 30--44 (41--56), 45--64 (50--49), and 65 and over (46--53).} An expressive account therefore predicts that a platform on which male traders supply 97\% of volume tilts toward Trump through its composition. Table~\ref{tab:direction_sex_age} shows the opposite in the two least active cohorts. Male traders committed a smaller share of their volume long Republican than female traders did. The male-minus-female gap in dollar-weighted direction is $-11$ percentage points among one-trade users ($t=-2.3$), $-5$ percentage points in the 2 to 10 cohort ($t=-2.2$), and indistinguishable from zero above that. Within-cohort regressions holding zip partisanship and age fixed put the male coefficient at $-0.10$ and $-0.05$ in the two casual cohorts and $-0.03$ in the 101+ cohort. Age runs the wrong way for expression as well. Traders under 30, the group that voted most Democratic, committed a larger share of their volume long Republican than traders over 45 in every cohort except the one-trade cohort. The last two columns make the point in one number. Assigning each trader the two-party Trump vote share of his or her sex-age cell gives an expressive benchmark between 52.5\% and 53.7\% long Republican in every cohort, because every cohort consists mostly of young male traders. Actual direction runs from 27\% to 63\% across cohorts. Demographic composition predicts a flat line. The data shows a gradient in trade intensity.

\begin{table}[ht]
\centering
\caption{Direction of trades by sex and age within cohort: share of classified election-market dollars committed long Republican. The expressive benchmark assigns each trader the two-party Trump share of his or her sex-age cell in the 2024 national exit poll and averages over the cohort's dollars. Sex is imputed and is missing for 20\% of users.}
\label{tab:direction_sex_age}
\small
\begin{tabular}{lccccccc}
\toprule
 & \multicolumn{2}{c}{Sex} & \multicolumn{3}{c}{Age} & \multicolumn{2}{c}{Cohort} \\
\cmidrule(lr){2-3}\cmidrule(lr){4-6}\cmidrule(lr){7-8}
Cohort & Female & Male & 18--29 & 30--44 & 45 and over & Benchmark & Actual \\
\midrule
=1       & 36.3 & 25.4 & 24.4 & 26.4 & 34.5 & 52.5 & 27.1 \\
2--10    & 56.0 & 50.8 & 58.3 & 54.1 & 42.8 & 52.6 & 51.3 \\
11--100  & 51.7 & 53.1 & 56.0 & 52.6 & 51.1 & 53.2 & 53.0 \\
101+     & 67.2 & 62.6 & 65.4 & 64.5 & 56.3 & 53.7 & 62.8 \\
\midrule
\multicolumn{8}{l}{\textit{Gaps in percentage points, bootstrap $t$ in parentheses}} \\
 & \multicolumn{2}{c}{Male $-$ Female} & \multicolumn{3}{c}{(18--29) $-$ (45 and over)} & & \\
=1       & \multicolumn{2}{c}{$-10.9$ ($-2.3$)} & \multicolumn{3}{c}{$-10.0$ ($-3.0$)} & & \\
2--10    & \multicolumn{2}{c}{$-5.2$ ($-2.2$)}  & \multicolumn{3}{c}{$+15.5$ ($3.0$)}  & & \\
11--100  & \multicolumn{2}{c}{$+1.3$ ($0.5$)}   & \multicolumn{3}{c}{$+5.0$ ($1.3$)}   & & \\
101+     & \multicolumn{2}{c}{$-4.5$ ($-1.1$)}  & \multicolumn{3}{c}{$+9.1$ ($1.6$)}   & & \\
\bottomrule
\end{tabular}
\end{table}

\subsection{Partisan attribution of direction and outcomes}\label{subsec:geo}

This subsection asks the central behavioral question directly, whether a trader's political affiliation influences the direction of the trader's positions, and whether any such influence shows up in outcomes. The direction results say nothing yet about politics. A Harris-heavy casual cohort could be Democrats backing their hopes or Republicans hedging their fears. To bring political preference into the analysis we exploit geographic variation. If trading is preference-driven, the partisan composition of a trader's home district should predict the direction of the trader's positions, and, because 2024 was a Republican victory, should predict outcomes as well. If trading is incentive-driven, both relationships should weaken with trade intensity and vanish for the cohort that sets prices.

Rather than partitioning zip codes into discrete bins, we attribute each trader fractionally to the two parties according to the home zip code's vote. A trader living in a zip code that voted 80--20 for Trump counts as 0.80 Republican and 0.20 Democrat. Formally, let $w_i$ denote the two-party Republican share of the 2024 presidential vote in trader $i$'s home zip code, allocated from precinct returns, and let $y_i$ be a trader-level outcome. The Republican-attributed and Democrat-attributed means of the outcome are
\[
\bar y_R \;=\; \frac{\sum_i w_i\, y_i}{\sum_i w_i}, \qquad \bar y_D \;=\; \frac{\sum_i (1-w_i)\, y_i}{\sum_i (1-w_i)},
\]
where the sums run over traders. In the limiting case in which every zip code votes entirely for one party, $w_i$ is 0 or 1, $\bar y_R$ is the average outcome of the traders who live in Republican zip codes, and $\bar y_D$ is the average outcome of the traders who live in Democratic zip codes. When zip codes are mixed, each trader contributes to both means in proportion to the vote of the trader's zip code. Because the sums run over traders, a zip code with more traders receives proportionally more weight. The dollar-weighted version multiplies each trader's weight by the trader's dollar volume. Under this attribution, $\bar y_R - \bar y_D$ estimates the difference in mean behavior between Republican and Democratic residents of the traders' communities. The weights apply to aggregates, and no individual trader is assigned a party. We compute the difference within each trade-count cohort, grouping the two most active cohorts as 101+, and test it with a $t$-statistic.\footnote{The two attributed populations share the same underlying traders, so a two-sample test is inappropriate. We compute the variance of the trader-level weighted-mean difference by the delta method over trader-level observations. A nonparametric bootstrap over traders reproduces the delta-method standard errors to two decimal places. For the dollar-weighted gap, whose sampling variation is dominated by a small number of high-volume traders, we report $t$-statistics from the trader-level bootstrap directly. The 2024 vote allocation covers 99.2\% of users. Repeating the analysis with 2020 ZCTA vote shares as weights leaves the results in Tables~\ref{tab:attrib_dir} and \ref{tab:attrib_out} nearly unchanged. Weighting instead by the zip code's donor lean, the Trump share of its 2024 presidential donors in the DIME database, gives the same result. The dollar-weighted gap runs 7.8, 8.6, and 4.9 percentage points across the casual cohorts and 0.4 percentage points ($t=0.3$) for the 101+ cohort.} Direction $y_i$ is trader $i$'s long-Republican share, the share of buy-side dollars committed to Republican-aligned positions across all election markets (Section~\ref{subsec:direction}). We classify the yes side of each market as Republican- or Democrat-aligned from the contract definition. The classified markets are the Trump and Harris contracts, party-of-winner contracts for the presidency, popular vote, individual states, Senate seats, and chamber control, and margin-of-victory contracts aligned with the party that must win for the contract to pay. A no-side buy counts toward the opposite party. The classification covers 395 markets and 92.6\% of all buy-side dollars. The excluded remainder is non-partisan (closest-state contracts, split-ticket and mixed-control combinations, swing-state counts). Outcomes are the indicator for finishing the sample profitable and winsorized ROI, computed over all markets the user traded.

\paragraph{Direction.} Table~\ref{tab:attrib_dir} shows three results. First, partisan attribution moves the direction of trading far less than the preference account requires. Averaged over all users, the Republican-attributed population committed 60.3\% of its classified volume to Republican-aligned positions and the Democrat-attributed population 59.7\%, both firmly on the same side of the market. Second, the one-trade cohort is aligned with the Democratic side under either attribution (31.6\% versus 24.7\% of volume long Republican). The expressive cohort's tilt toward the losing side is a platform-wide phenomenon, not a partisan-geography phenomenon, and even the Republican-attributed one-trade population traded against the preference attributed to it. Third, and most important, the dollar-weighted partisan gap declines with trade intensity in both magnitude and statistical strength, 6.8 percentage points ($t=10.1$) for one-trade users, 6.9 ($t=7.4$) and 4.5 ($t=4.1$) for the middle cohorts, and vanishes for the 101+ cohort ($-0.2$ percentage points, $t=-0.1$).\footnote{Within the 101+ cohort, the gap for users with more than 5{,}000 trades is $-0.9$ percentage points ($t=-0.5$). If anything, Democrat-attributed heavy-trader volume leans slightly more Republican. Binning traders by dollar volume or by shares bought instead of trade count gives the same result. The dollar-weighted gap is $-0.7$ percentage points ($t=-0.5$) for the 9{,}922 traders with more than \$10{,}000 of volume and $-1.0$ percentage points ($t=-0.6$) for the 6{,}012 traders who bought more than 10{,}000 election-market shares.}
\begin{defbox}
Among the traders who dominate dollar volume and set the market price, volume from Republican- and Democrat-attributed traders is directionally indistinguishable.
\end{defbox}
The trader-level gap falls less, from roughly 8 percentage points among casual cohorts to 5.6 percentage points ($t=20.0$) for the 101+ cohort, because counting each trader equally gives weight to the cohort's many moderate-volume members. The dollar-weighted column, the margin that moves prices, is the one that converges to zero. The trader-level gaps remain statistically significant in every cohort. This does not contradict the framework of Section~\ref{subsec:framework}. The price is a dollar-weighted average of beliefs, so the framework requires the partisan gap to be zero only when traders are weighted by dollar volume. A partisan gap among traders who account for little of the volume does not move the price.

\begin{table}[ht]
\centering
\caption{Direction of election-market trades under fractional partisan attribution, by cohort. R and D columns are the $w$- and $(1-w)$-weighted means over the same traders, where $w$ is the 2024 two-party Republican vote share of the trader's home zip code. They are attribution-weighted means, not zip-code bins. ``\$-weighted'' aggregates dollars, ``trader-level'' averages each trader's long-Republican share. Gap is R $-$ D.}
\label{tab:attrib_dir}
\begin{tabular}{lrrrrrrrrrr}
\toprule
 & & \multicolumn{4}{c}{\% long Republican (\$-weighted)} & \multicolumn{4}{c}{\% long Republican (trader-level)} \\
\cmidrule(lr){3-6}\cmidrule(lr){7-10}
Cohort & Users & R & D & Gap & $t$ & R & D & Gap & $t$ \\
\midrule
=1            &  24{,}673 & 31.6 & 24.7 &  6.8 & 10.1 & 35.3 & 27.0 & 8.3 & 36.9 \\
2--10         & 111{,}170 & 57.4 & 50.5 &  6.9 &  7.4 & 67.2 & 59.1 & 8.2 & 81.7 \\
11--100       &  46{,}974 & 56.2 & 51.6 &  4.5 &  4.1 & 65.4 & 58.7 & 6.7 & 48.6 \\
101+          &  10{,}235 & 64.0 & 64.1 & $-0.2$ & $-0.1$ & 65.1 & 59.4 & 5.6 & 20.0 \\
\midrule
All           & 193{,}052 & 60.3 & 59.7 &  0.6 &  0.5 & 62.8 & 54.8 & 8.0 & 105.3 \\
\bottomrule
\end{tabular}
\\[0.25em]
\footnotesize Users are traders with a matched zip code and at least one buy on a party-classifiable election market. Dollar-weighted $t$-statistics are based on a bootstrap over traders, trader-level $t$-statistics on influence-function standard errors.
\end{table}

\paragraph{The zip-code view.} An equivalent way to present Table~\ref{tab:attrib_dir} is to compare each zip code's vote directly with what its traders did. Figure~\ref{fig:zip_vote_bet} plots the Republican share of a zip code's buy-side dollars against its two-party Republican vote share for zip codes with at least twenty matched traders, grouped into twenty bins by vote share. Trade direction is compressed toward the platform-wide Republican tilt. The trader-weighted slope is 0.39 (0.41 unweighted), in 77\% of zip codes the Republican share of volume exceeds the Republican share of the vote, and the mean gap between trading share and vote share is 15.5 percentage points. The compression is one-sided. Zip codes that voted under 40\% Republican committed 60\% of their volume long Republican, while zip codes over 60\% Republican committed 66\%, almost exactly their vote. Republican communities traded in line with their vote, and Democratic communities traded against theirs.\footnote{The zip-code view cannot support the cohort split that carries this section. Too few zip codes have twenty or more traders from a single cohort. The fractional attribution, which uses all matched traders, is the primary design for that reason.} The same comparison tests whether the uneven geography of participation distorted the platform's direction, which is the covariance that the own-type model of Appendix~\ref{app:model} requires to be nonzero for participation to distort the price (Proposition~\ref{prop:owntype}). Reweighting so that every location counts in proportion to its population moves the long-Republican share of these zip codes by less than two percentage points, from 58.5\% to 60.2\%. The geographic skew in who trades therefore did not move the platform's direction.

\begin{figure}[ht]
  \centering
  \includegraphics[width=0.75\textwidth]{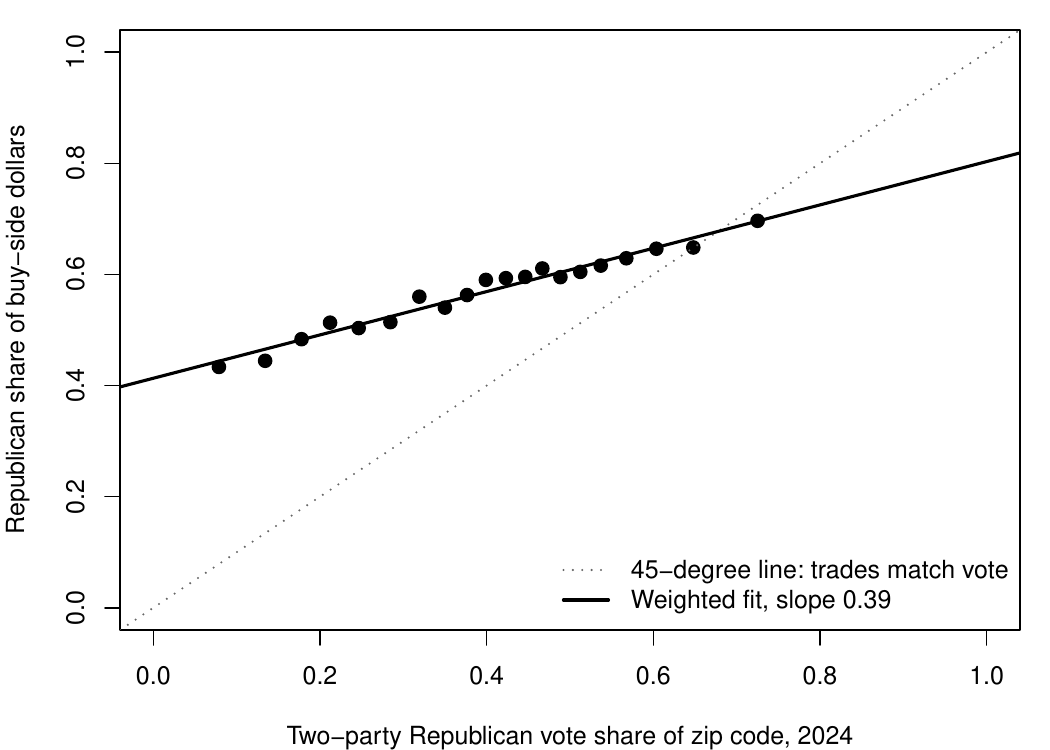}
  \caption{Trade direction against the vote, Kalshi 2024 election markets. Each zip code's Republican share of buy-side dollars across all election markets is plotted against its two-party Republican share of the 2024 presidential vote. Zip codes with at least twenty matched traders are sorted by Republican vote share and grouped into twenty bins with equal numbers of traders, and each point is a bin average. The dotted line marks trades matching the vote. The solid line is the trader-weighted least-squares fit across zip codes. In Democratic-voting zip codes the Republican share of volume is far above the Republican share of the vote. In Republican-voting zip codes the two are close.}
  \label{fig:zip_vote_bet}
\end{figure}

A regression that includes both vote measures gives a related result. When the 2020 and 2024 zip code vote shares are entered together with cohort indicators, the coefficient on the 2024 vote is $+1.24$ and the coefficient on the 2020 vote is $-0.69$, both highly significant. Trade direction, like profit (Section~\ref{subsec:profit}), tracks where the vote moved between 2020 and 2024 rather than long-standing partisanship. Traders in zip codes that moved toward Trump may have acted on local information, or those zip codes may be where the casual traders who bought Trump contracts live. The absence of a partisan gap among heavy traders favors the second explanation. Under either explanation, the result is inconsistent with an account in which long-standing partisan loyalty determines the direction of trades.

\paragraph{Outcomes.} Table~\ref{tab:attrib_out} repeats the exercise for profitability. Among casual traders the Republican-attributed population did better. One-trade Republican-attributed users finished profitable 6.3 percentage points more often than their Democrat-attributed counterparts ($t=30.4$), with a 0.11 gap in mean ROI. This is an ex post directional effect, examined closely in Section~\ref{subsec:profit}. Casual traders in Republican communities leaned slightly more toward Trump, and Trump won. The gap shrinks as trade intensity rises and is economically zero for the 101+ cohort, $-0.8$ percentage points on the profitable share and $+0.003$ on ROI, signs pointing in opposite directions and both negligible next to the casual-cohort gaps.\footnote{Within the 101+ cohort, the profitable-share gap for users with more than 5{,}000 trades is $-5.8$ percentage points ($t=-2.9$). Among the heaviest traders, Democrat-attributed users were, if anything, slightly more likely to finish profitable.} The partisan gap in outcomes appears where the partisan gap in direction appears, among the casual cohorts. For the traders whose volume determines the price, zip-level partisanship predicts neither how they traded nor what they earned.

\begin{table}[ht]
\centering
\caption{Trading outcomes under fractional partisan attribution, by cohort. Outcomes are over all markets, as in Section~\ref{subsec:profit}. Gap is R $-$ D.}
\label{tab:attrib_out}
\begin{tabular}{lrrrrrrrr}
\toprule
 & \multicolumn{4}{c}{\% profitable} & \multicolumn{4}{c}{Mean ROI (winsorized)} \\
\cmidrule(lr){2-5}\cmidrule(lr){6-9}
Cohort & R & D & Gap & $t$ & R & D & Gap & $t$ \\
\midrule
=1            & 28.8 & 22.5 &  6.3 & 30.4 & $-0.510$ & $-0.619$ & $ 0.108$ & 30.5 \\
2--10         & 53.9 & 47.2 &  6.7 & 60.4 & $ 0.052$ & $-0.035$ & $ 0.087$ & 63.5 \\
11--100       & 34.5 & 31.0 &  3.6 & 21.8 & $-0.081$ & $-0.114$ & $ 0.034$ & 28.6 \\
101+          & 29.8 & 30.6 & $-0.8$ & $-2.3$ & $-0.051$ & $-0.054$ & $ 0.003$ & 2.6 \\
\midrule
All           & 44.9 & 39.0 &  5.9 & 70.6 & $-0.056$ & $-0.134$ & $ 0.078$ & 77.3 \\
\bottomrule
\end{tabular}
\\[0.25em]
\footnotesize Computed over all users in each cohort with a matched home zip code (sample sizes differ slightly from Table~\ref{tab:attrib_dir}, which additionally requires a buy on a party-classifiable election market).
\end{table}

\paragraph{Donors versus voters.} The DIME contribution data allows a further question, whether the direction of a zip code's traders is better predicted by how the zip code voted or by which party its campaign donors supported. Donors are the residents who are most politically engaged and who commit money to politics. If the partisan gap among casual traders reflects political engagement carried into the market, the donor measure should predict direction beyond what the vote predicts. We estimate
\[
S_i = \alpha + \beta_1 \, \text{VoteShare}_{z(i)} + \beta_2 \, \text{DonorShare}_{z(i)} + \gamma_{c(i)} + \varepsilon_i,
\]
where $S_i$ is trader $i$'s long-Republican share, $\text{VoteShare}_{z(i)}$ is the 2024 two-party Republican vote share of the trader's home zip code, $\text{DonorShare}_{z(i)}$ is the Trump share of the zip code's 2024 presidential campaign donors, and $\gamma_{c(i)}$ is an indicator for the trader's cohort. Table~\ref{tab:donor_vote} reports the estimates. Entered one at a time and without cohort indicators, the two measures predict direction about equally. A trader's long-Republican share rises by 0.55 when the vote share changes from 0 to 1 (column 1) and by 0.54 for the same change in the donor share (column 2). Entered together with cohort indicators (column 3), the coefficient on the donor share is 0.37 and the coefficient on the vote share is 0.16. The two measures have a correlation of 0.94, so the comparison is imprecise, but it indicates that traders resemble the donors of their zip code more than its voters. This describes who chooses to trade. It does not show that prices reflect the politics of traders. Column (4) weights traders by dollar volume, and neither coefficient is statistically distinguishable from zero. Demographic and geographic variables together explain little of the variation in direction across traders (Section~\ref{subsec:lpm}). The number of donors per 1,000 residents, a measure of how politically engaged a zip code is, has almost no relationship with the direction of trades, and none with profit (Section~\ref{subsec:profit}). The Republican share of trading volume correlates 0.34 with the Republican vote share and 0.36 with the Republican donor share, based on trader location. However partisanship is measured, it is weakly related to the direction of a zip code's trades.

\begin{table}[ht]
\centering
\caption{User-level regressions of the long-Republican share of classified election-market dollars on the 2024 Republican vote share and the 2024 Republican donor share of the home zip code. The donor share is the Trump share of the zip code's 2024 presidential campaign donors in the DIME database, defined for zip codes with at least ten such donors. Column (4) weights observations by classified dollar volume. Standard errors, in parentheses, are heteroskedasticity-robust.}
\label{tab:donor_vote}
\begin{tabular}{lcccc}
\toprule
 & (1) & (2) & (3) & (4) \\
\midrule
Zip Republican vote share (2024)  & $0.550$\rlap{$^{***}$} & & $0.164$\rlap{$^{***}$} & $-0.085$ \\
                                  & $(0.005)$ & & $(0.015)$ & $(0.214)$ \\
Zip Republican donor share (2024) & & $0.537$\rlap{$^{***}$} & $0.373$\rlap{$^{***}$} & $0.311$ \\
                                  & & $(0.005)$ & $(0.015)$ & $(0.201)$ \\
Cohort indicators                 & no & no & yes & yes \\
Weighted by dollar volume         & no & no & no & yes \\
Observations                      & 192{,}518 & 192{,}518 & 192{,}518 & 192{,}518 \\
$R^2$                             & 0.055 & 0.057 & 0.111 & 0.064 \\
\bottomrule
\end{tabular}
\\[0.25em]
\footnotesize $^{***}\,p<0.001$.
\end{table}

\paragraph{Attenuation.} The attribution treats a zip code's vote share as the probability that a trader who lives there is a Republican. Traders are a selected group within their zip codes, so the measure could understate the partisanship of individual traders, and could understate it more for heavy traders than for casual traders. Appendix~\ref{app:robust} reports three checks. First, in zip codes that voted more than 75\% or less than 25\% Republican, the difference in dollar-weighted direction between residents of Republican and Democratic zip codes is 31 percentage points for one-trade users and 37 percentage points for the 2 to 10 cohort, and it is $-2.5$ percentage points ($t=-0.4$) for the 101+ cohort. Second, the home zip codes of heavy traders are no less informative about party than those of casual traders. Third, in the PredictIt data, where registered party is observed, the vote share of a trader's area closely tracks the share of traders registered Republican. The absence of a partisan gap among heavy traders is therefore not produced by a weaker measure of their politics.

\paragraph{Two-sided and one-sided traders.} A different objection is that the heavy cohort's zero gap is mechanical. Traders who supply liquidity hold positions on both sides of a contract, and the direction of such positions is inventory rather than opinion. We split traders by whether at least 10\% of their main-contract dollars sat on the minority side of their own book. No one-trade user is two-sided by construction. The two-sided share of volume rises from 17\% in the 2 to 10 cohort to 38\%, 52\%, and 92\% for traders with more than 5,000 trades (Table~\ref{tab:sidedness}). The split supports part of the objection. Among traders with more than 100 trades, the two-sided group, which supplies four-fifths of the cohort's volume, shows no partisan gap (attribution gap $-1.3$ percentage points, $t=-0.7$; gap between zip codes that voted more than 75\% and less than 25\% Republican $-7$ percentage points, $t=-0.9$) and paid fair prices on the main contracts ($z=-3.0$, Section~\ref{subsec:profit}). The one-sided heavy traders show a gap of 4.8 percentage points ($t=2.2$), 20 percentage points in homogeneous zip codes, and were on the winning side of the election ($z=+25.8$), a result that one outcome cannot distinguish from luck. The same split within the 11 to 100 cohort gives 8 and 33 percentage points for one-sided traders and zero for two-sided. The partisan gap among one-sided traders falls only in the most active cohort, from 31 to 37 percentage points in the casual cohorts and 33 percentage points in the 11 to 100 cohort to 20 percentage points for the 101+ cohort on the homogeneous-zip measure, so the partisan gap among directional traders narrows with trade intensity, but less completely than the cohort tables alone suggest. At the same time, the volume that sets the price comes overwhelmingly from two-sided capital whose positions carry no partisan signal. Trade count as such does not remove partisanship from the price. As traders scale up they shift from taking a position on the outcome to trading the price, and that shift, which is the profit motive at work, removes it.

\begin{table}[ht]
\centering
\caption{Two-sided and one-sided traders. A trader is two-sided if at least 10\% of his or her main-contract dollars sat on the minority side of the trader's own positions. Direction is the share of classified election-market dollars committed long Republican. Gaps are Republican-attributed minus Democrat-attributed (fractional attribution) and Red-zip minus Blue-zip residents in homogeneous zip codes, with bootstrap $t$ in parentheses. $z$ is the calibration score of Section~\ref{subsec:profit} on the two main presidential contracts.}
\label{tab:sidedness}
\footnotesize
\begin{tabular}{llrrrrrr}
\toprule
Cohort & Traders & Users & \$M & \% long R & Attrib. gap & Homog.-zip gap & $z$ \\
\midrule
11--100 & one-sided & 31,074 & 36.7  & 51.6 & $8.0$ ($5.4$)   & $32.5$ ($4.4$)  & -- \\
11--100 & two-sided & 8,868  & 22.5  & 59.5 & $-0.8$ ($-0.4$) & $-11.4$ ($-1.4$) & -- \\
101+    & one-sided & 6,136  & 40.7  & 62.2 & $4.8$ ($2.2$)   & $19.8$ ($2.4$)  & $+25.8$ \\
101+    & two-sided & 2,227  & 142.2 & 65.4 & $-1.3$ ($-0.7$) & $-7.4$ ($-0.9$) & $-3.0$ \\
\midrule
\multicolumn{8}{l}{Two-sided share of cohort volume: =1, 0\%; 2--10, 17\%; 11--100, 38\%; 101--5,000, 52\%; over 5,000, 92\%.} \\
\bottomrule
\end{tabular}
\end{table}

The graduated pattern is what the incentives mechanism predicts. Partisan geography leaves a visible mark on the behavior and the outcomes of the traders who trade least, the cohort whose single trade most resembles expression, a fading mark on directional traders as they scale up, and none on the two-sided capital that supplies most of the price-setting volume. Prices paid show the same pattern in weaker form. Volume-weighted average prices for the Trump contract differ by under two cents across the partisan spectrum, but because all traders transact against a single central order book, price convergence is close to mechanical. The informative margins are direction and outcomes, and this section rests on them for that reason.

\subsection{Regression evidence}\label{subsec:lpm}

The attribution analysis can be summarized in regression form, which serves two purposes. It quantifies how much of trading behavior demographics and geography explain, and it verifies that the partisan gradient survives demographic controls. For each user $i$ we take the long-Republican share $S_i \in [0,1]$ of classified election-market dollars defined in Section~\ref{subsec:geo} and estimate
\[
S_i = \alpha + \beta_1 \, \text{RepShare24}_{z(i)} + \beta_2 \, \text{Male}_i + \beta_3 \, \text{Age}_i + \gamma_{c(i)} + \varepsilon_i,
\]
where $\text{RepShare24}_{z(i)}$ is the two-party Republican share of the 2024 presidential vote in user $i$'s home zip code and $\gamma_{c(i)}$ is a cohort fixed effect. Observations are weighted by classified dollar volume, the price-relevant margin.\footnote{Unweighted estimation raises the partisanship coefficient to 0.56 and the $R^2$ to 0.12. Trader counts weight the casual users, so the unweighted slope is dominated by the cohorts in which the partisan tilt lives.}

In Table~\ref{tab:lpm_user} the coefficient on RepShare24 is the empirical counterpart of the projection parameter $\rho$ of Section~\ref{subsec:framework}, estimated on the geographic dimension of type and weighted by the volume that carries the price. Table~\ref{tab:lpm_user} shows two results. First, explanatory power is minimal. Demographics and home-zip partisanship together explain under 1\% of the dollar-weighted variation in direction, and adding cohort fixed effects raises the $R^2$ only to 0.078. More than nine-tenths of the variation in the direction traders take is unexplained by who they are, where they live, and how much they trade. Second, the within-cohort rows reproduce the attribution gradient in regression form. The partisanship coefficient is 0.51 and 0.56 in the two casual cohorts, falls to 0.30 in the middle cohort, and is 0.08 for the 101+ cohort, where it is not statistically distinguishable from zero (standard error 0.07, bootstrap 95\% confidence interval $-0.03$ to 0.24). That interval excludes the coefficients of the two casual cohorts, 0.51 and 0.56. A 10-percentage-point increase in home-zip Republican vote share is associated with an increase of roughly 5 percentage points in a casual trader's long-Republican share. For a trader in the 101+ cohort the estimated increase is 0.8 percentage points.\footnote{Volume in the 101+ cohort is concentrated in a small number of traders. Standard errors that treat the dollar weights as precision weights are therefore too small, by a factor of about five for this cohort. We report heteroskedasticity-robust standard errors, and a bootstrap over traders gives nearly the same values. Within the cohort with more than 5,000 trades the coefficient is 0.13 (standard error 0.14). The dollar-weighted attribution gap for the same cohort is $-0.9$ percentage points ($t=-0.5$, Section~\ref{subsec:geo}).}

\begin{table}[ht]
\centering
\caption{User-level regressions of the long-Republican share of classified election-market dollars on home-zip partisanship. Each cohort row is a separate regression with sex and age controls. The last row pools all traders and adds an indicator for each cohort (cohort fixed effects). Observations weighted by classified dollar volume. Standard errors are heteroskedasticity-robust.}
\label{tab:lpm_user}
\begin{tabular}{lrrrr}
\toprule
Cohort & Users & Mean $S_i$ & RepShare24 coef. & SE \\
\midrule
=1              &  20{,}112 & 0.268 & $0.513^{***}$ & (0.048) \\
2--10           &  90{,}601 & 0.513 & $0.561^{***}$ & (0.046) \\
11--100         &  38{,}734 & 0.529 & $0.300^{***}$ & (0.090) \\
101+            &   8{,}570 & 0.630 & $0.084$ & (0.071) \\
\midrule
All traders, cohort indicators & 158{,}017 & 0.598 & $0.308^{***}$ & (0.048) \\
\bottomrule
\end{tabular}
\\[0.25em]
\footnotesize $^{***}\,p<0.001$. Mean $S_i$ is dollar-weighted. Sex and age coefficients are small in every specification (male $-0.046$, age $-0.001$ in the pooled model). Pooled $R^2$: 0.008 without cohort fixed effects, 0.078 with.
\end{table}

\subsection{Trading outcomes and demographics}\label{subsec:profit}

The preceding subsections show that demographics are weak predictors of the direction of a trader's positions. Profit provides a second test. If the demographic homogeneity of the platform shaped market behavior in a way that mattered, it should surface in who wins and who loses money. The expressive version of the diversity critique predicts in particular that the demographic majority, young male traders, should drive prices and bear the corresponding gains or losses as a bloc. We test this directly.

We measure outcomes three ways. The first is return on investment, $\text{ROI}_i = \text{PnL}_i / \text{Volume}_i$, with PnL net of fees.\footnote{We winsorize ROI at the 1st and 99th percentiles in the regressions to limit the influence of a small number of extreme ratios. Descriptive tables report unwinsorized values.} The second is an indicator for finishing the sample profitable, and the third is the inverse hyperbolic sine of dollar PnL, which accommodates heavy tails without dropping observations.

\paragraph{Profitability is flat across the dimensions of non-diversity.} Table~\ref{tab:profit_group} reports outcomes by sex and age. On sex, the dimension on which Kalshi is most unrepresentative, profitability is flat. Male and female traders finish profitable at the same rate (41.6\% versus 41.7\%) with statistically indistinguishable returns. The one demographic gradient is age. The profitable share rises from 39.9\% among traders 18--29 to 51.5\% among traders 60 and over. The regression below shows that this difference reflects the side taken. Once the trader's own long-Republican share is included, the coefficient on age falls to zero (Table~\ref{tab:profit_reg}, column 4). None of these gaps is large, and none runs in the direction the expressive account requires.

\begin{table}[ht]
\centering
\caption{Trading outcomes by demographic group, Kalshi 2024. ROI is PnL net of fees divided by dollar volume. ``Profitable'' is the share of traders with positive PnL.}
\label{tab:profit_group}
\begin{tabular}{lrrrr}
\toprule
Group & Users & \% profitable & Mean ROI & Median ROI \\
\midrule
\multicolumn{5}{l}{\textit{Sex}}\\
\quad Male            & 139{,}523 & 41.6 & $-0.089$ & $-0.074$ \\
\quad Female          &  23{,}003 & 41.7 & $-0.116$ & $-0.090$ \\
\multicolumn{5}{l}{\textit{Age}}\\
\quad 18--29          & 102{,}212 & 39.9 & $-0.101$ & $-0.080$ \\
\quad 30--39          &  54{,}566 & 41.2 & $-0.105$ & $-0.084$ \\
\quad 40--49          &  25{,}974 & 45.3 & $-0.064$ & $-0.052$ \\
\quad 50--59          &  10{,}901 & 48.4 & $-0.039$ & $-0.019$ \\
\quad 60+             &   4{,}790 & 51.5 & $-0.012$ & $+0.015$ \\
\bottomrule
\end{tabular}
\end{table}

\paragraph{Regression evidence.} Table~\ref{tab:profit_reg} reports user-level regressions of each outcome on sex, age, and the 2024 Republican vote share of the home zip code. None of these characteristics predicts profit on its own. Each predicts profit only through the side the trader took. Because Trump won, any group that bought more Trump contracts earned more. Three comparisons across the columns of the table establish this.
\begin{enumerate}
\item Sex does not predict profit. In column (1) male traders earned 3.2 cents more per dollar traded than female traders. With cohort indicators added in column (2) the difference is 0.4 cents and not statistically significant. Male traders are concentrated in the more active cohorts, and the cohort indicators absorb the whole of the difference.
\item Location predicts profit only through direction. In column (2) a trader in a zip code that voted entirely Republican is predicted to earn 53 cents more per dollar than a trader in a zip code that voted entirely Democratic. Column (4) adds the trader's own long-Republican share. Its coefficient is 0.93, and the coefficient on the zip code's vote share falls from 0.53 to 0.02.
\item Age predicts profit only through direction. Older traders were more often profitable in columns (1) to (3). In column (4) the coefficient on age is zero.
\end{enumerate}
The share of the variation in ROI that the regression explains rises from 13\% in column (2) to 56\% in column (4).

The regressions also show that ROI rises with trade intensity, by 0.46 to 0.62 in every more active cohort relative to the one-trade cohort. These differences are not estimates of the effect of trading more, for three reasons. The cohorts are partly sorted by outcome, since a trader whose first position loses value may stop trading and a trader whose positions gain value has more capital to continue. About half of each difference is the side taken. One-trade users placed 73\% of their main-contract volume long Harris, and in column (4) the cohort differences fall to between 0.17 and 0.35. And ROI divides profit by dollar volume, which active traders raise by buying and selling the same contracts repeatedly, so their ROI is pulled toward zero. Our argument does not require that active traders are more skilled, only that the direction of their trades does not follow their politics, which Section~\ref{subsec:geo} shows.

\begin{table}[ht]
\centering
\caption{User-level regressions of trading outcomes on demographics. Male is an indicator (imputed sex, observations with missing sex omitted). Zip Republican share is the 2024 two-party Republican vote share of the home zip code. Own long-Republican share is the share of the trader's classified election-market dollars committed to Republican-aligned positions (Section~\ref{subsec:geo}). Cohorts are defined in Section~\ref{subsec:cohort}.}
\label{tab:profit_reg}
\begin{tabular}{lcccc}
\toprule
 & (1) ROI & (2) ROI & (3) Pr(profitable) & (4) ROI \\
\midrule
Male            & $0.032$\rlap{$^{***}$} & $0.004$ & $0.008$ & $0.005$ \\
                & $(0.004)$ & $(0.004)$ & $(0.003)$ & $(0.003)$ \\
Age (years)     & $0.0012$\rlap{$^{***}$} & $0.0018$\rlap{$^{***}$} & $0.0027$\rlap{$^{***}$} & $-0.0001$ \\
                & $(0.0001)$ & $(0.0001)$ & $(0.0001)$ & $(0.0001)$ \\
Zip Republican share (2024) & $0.566$\rlap{$^{***}$} & $0.531$\rlap{$^{***}$} & $0.400$\rlap{$^{***}$} & $0.022$\rlap{$^{***}$} \\
                & $(0.008)$ & $(0.007)$ & $(0.006)$ & $(0.005)$ \\
Own long-Republican share & & & & $0.929$\rlap{$^{***}$} \\
                & & & & $(0.002)$ \\
Cohort indicators & no & yes & yes & yes \\
Observations & 161{,}104 & 161{,}104 & 161{,}104 & 158{,}017 \\
$R^2$ & 0.034 & 0.127 & 0.071 & 0.557 \\
\bottomrule
\end{tabular}
\\[0.25em]
\footnotesize $^{**}\,p<0.01$, $^{***}\,p<0.001$. Standard errors in parentheses. Dependent variables: winsorized ROI (columns 1, 2, and 4) and an indicator for positive PnL (column 3). An inverse-hyperbolic-sine PnL specification (not shown) yields the same signs.
\end{table}

\paragraph{The 2020 and 2024 votes compared.} The share of traders who finished profitable rises from 33.7\% in zip codes that voted less than 40\% Republican in 2024 to 50.6\% in zip codes that voted more than 60\% Republican. This observation admits two possible explanations: Traders in Republican areas may be better forecasters, or they may simply have bought more Trump contracts in an election that Trump won. We separate these two possibilities with the following simple regression analysis. We enter the zip code's 2020 and 2024 Republican vote shares together in the regression of column (2). If traders from Republican areas were better forecasters, the long-standing partisanship of the zip code, measured by the 2020 vote, should predict profit on its own. According to our analysis, it does not. The coefficient on the 2024 share is $+1.20$ and the coefficient on the 2020 share is $-0.66$, both significant at the 0.1\% level. Holding the 2024 vote fixed, a trader in a zip code that was more Republican in 2020 earned less. Profit follows the shift in the vote toward Trump between 2020 and 2024, which is where the traders who bought Trump contracts lived, not the zip code's long-standing partisanship. Measures based on campaign donors give the same result. The number of donors per 1,000 residents, a measure of political engagement, has no statistically significant relationship with returns ($p=0.11$).

\paragraph{A calibration score.} ROI and the profitable share both depend on which candidate won. A trader who bought Trump contracts profited whether or not they overpaid or underpaid for the position. The calibration score asks a narrower question, whether a group's positions won as often as would be implied by the prices they paid. A position is a single buy trade on one of the two main presidential contracts, for which settlement is unambiguous. For position $i$, let $p_i$ be the price paid per contract, in dollars, and let $W_i$ equal 1 if the side bought paid out and 0 if it did not. A contract bought at price $p_i$ is priced by the market to win with probability $p_i$. If a group paid fair prices, the expected number of its positions that win is $\sum_i p_i$, with variance $\sum_i p_i(1-p_i)$ when positions are treated as independent.\footnote{Every position resolves on the same election, so the positions are not independent and the score should not be compared with standard normal critical values. We display it here in order to help in comparing groups, not as a formal test of any one group against zero.} The score is the number of positions that won minus the number the prices implied, divided by the standard deviation,
\[
z \;=\; \frac{\sum_i W_i - \sum_i p_i}{\sqrt{\sum_i p_i(1-p_i)}}.
\]
The numerator is the profit from buying one contract in each position, so the score is a standardized profit in which every position counts equally, whatever its size. A group with $z$-score near zero won as often as the prices it paid implied, and so paid fair prices on average. A group with $z$ below zero won less often than its prices implied, and so overpaid on average. For a group that bought both sides of the market, the election outcome largely cancels and the score measures whether the prices paid were fair at the time of purchase. For a group that bought mostly one side, the score mostly reflects whether that side won, regardless of price paid.

\begin{table}[ht]
\centering
\caption{Calibration $z$-score on the two main presidential contracts, by group. $z>0$ indicates a group won more often than the prices it paid implied. $z\approx 0$ indicates fair value. Positions is the number of buy trades. Implied is the average price paid per contract, in dollars, which is the win rate the prices implied. Realized is the share of positions that won.}
\label{tab:zscore}
\begin{tabular}{lrrrr}
\toprule
Group & Positions & Implied & Realized & $z$ \\
\midrule
All positions        & 367{,}615 & 0.489 & 0.521 & $+40.3$ \\
\multicolumn{5}{l}{\textit{By cohort}}\\
\quad =1             &  21{,}762 & 0.438 & 0.275 & $-50.6$ \\
\quad 2--10          & 174{,}445 & 0.489 & 0.549 & $+53.2$ \\
\quad 11--100        & 116{,}597 & 0.488 & 0.515 & $+19.4$ \\
\quad 101--5{,}000   &  36{,}730 & 0.510 & 0.556 & $+18.6$ \\
\quad \textgreater 5{,}000 & 18{,}081 & 0.518 & 0.508 & $-2.7$ \\
\multicolumn{5}{l}{\textit{By sex}}\\
\quad Male           & 263{,}598 & 0.487 & 0.513 & $+28.5$ \\
\quad Female         &  33{,}278 & 0.492 & 0.532 & $+15.9$ \\
\bottomrule
\end{tabular}
\end{table}

Table~\ref{tab:zscore} compares the cohorts on a single scale. Compared across cohorts, the score shows the pattern the incentives account predicts. One-trade users have a $z$ of $-50.6$. They paid prices implying a 43.8\% win rate and won 27.5\% of the time. The heavy-trader cohort that sets the market price has a $z$ of $-2.7$, the smallest in magnitude of any group. It paid prices implying a 51.8\% win rate and won 50.8\% of the time, within one percentage point of fair value. Section~\ref{subsec:framework} defined $B$ as the dollar-weighted average difference between traders' beliefs and the true probability. The score is its observable counterpart for one cohort. It compares the prices the cohort paid with how often its positions won. For traders with more than 5,000 trades, who account for 79\% of volume, the difference is one percentage point. Among all traders with more than 100 trades, the two-sided traders account for four-fifths of volume and have a score of $-3.0$ (implied win rate 51.4\%, realized win rate 50.7\%). The one-sided traders with more than 100 trades account for the remaining one-fifth and have a score of $+25.8$ (implied win rate 51.0\%, realized win rate 59.6\%). They bought mostly the winning side, and one election cannot distinguish that from luck (Section~\ref{subsec:geo}). Compared across demographic groups, the score shows no difference between the groups on which the platform is least representative. Male and female traders have similar positive scores, and the score does not vary systematically with age. Figure~\ref{fig:profit_demo} summarizes the two halves of the result. Profitability is flat across demographic groups and rises with trade intensity, and the calibration score is close to zero for the most active cohort.

\begin{figure}[h!]
  \centering
  \includegraphics[width=0.95\textwidth]{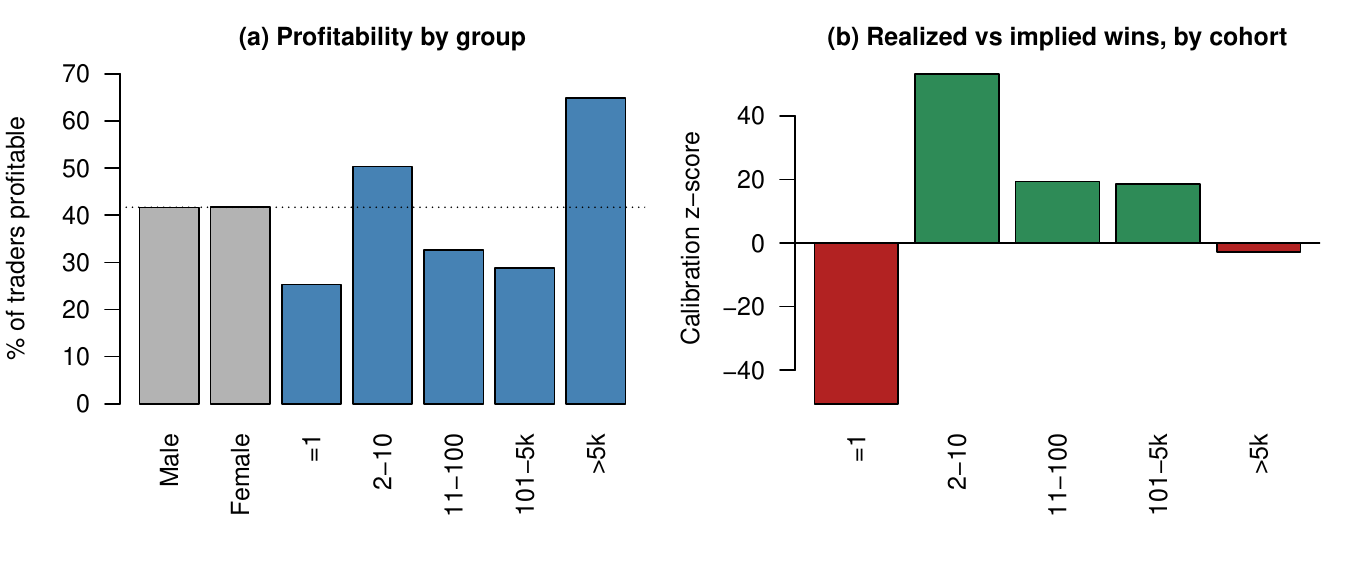}
  \caption{Trading outcomes on Kalshi's 2024 election markets. Left: share of traders finishing profitable across all election markets, by demographic group and by trade-count cohort (dashed line: platform average, 41.7\%). Right: calibration $z$-score on the two main presidential contracts, by cohort. Profitability is flat across demographics and rises with trade intensity. The heavy-trader cohort that sets prices is calibrated to fair value ($z\approx 0$).}
  \label{fig:profit_demo}
\end{figure}

The outcome evidence reinforces the direction evidence. Returns on Kalshi did not differ along demographic lines. They were highest in the most active cohort and lowest in the one-trade cohort, and the two-sided traders who account for most of the volume paid fair prices on average.

\subsection{External replication on PredictIt in 2016}\label{subsec:predictit}

The Kalshi sample is a single market in a single election, and its preference measures are ecological. To test external validity we replicate the user-level analysis on PredictIt's 2016 presidential market, using a separate proprietary dataset that includes registered party affiliation and verified federal donor status, individual-level preference measures sharper than zip-code attribution. The 2016 market is also a harder case for the incentives account. PredictIt priced Trump at 22 cents on the day before the election, so the market favored the candidate who lost. The data covers the entire election cycle, a longer period than the one month of Kalshi trading in the 2024 cycle. It records each trader's net purchases on each side of the presidential market, not individual trades or transaction prices. PredictIt is a research-oriented exchange operating under a CFTC no-action letter with an \$850 per-contract investment limit. Its smaller, politically engaged user base provides a useful contrast. Of the traders in the presidential market with a recorded sex, 92\% were men, and about 42\% of traders were under 30 in 2016, against 52\% female and 19\% under 30 in the 2016 electorate.\footnote{Sex is recorded for 55\% of these traders and is inferred from voter-file data by the data provider. Age in the data is age in July 2025, which we convert to age in November 2016. Electorate shares are from the national exit poll conducted by Edison Research for the National Election Pool, November 2016.} Their politics is also observed directly.\footnote{Voter-file records in the dataset confirm the engagement. Among the 62\% of accounts matched to a voter file, users who traded in 2016 voted in that year's general election at a 76\% rate, against 62\% for accounts that never traded, and turnout among traders varies only modestly by registered party (74\% to 80\%). For comparison, 61.4\% of the citizen voting-age population reported voting in 2016 in the Census Bureau's Current Population Survey. Turnout is unrelated to the direction of trading. Share-weighted long-Trump shares are 0.55 for 2016 voters and 0.53 for non-voters.} If incentives govern behavior, the qualitative pattern should hold. If their effect weakens where investment limits bind, the preference signal should be larger.

We restrict to the 962 users who purchased shares on the head-to-head Trump--Clinton market. Direction on PredictIt is measured in shares purchased, not dollars. Table~\ref{tab:predictit_party} reports the share-weighted long-Trump share by registered party.

\begin{table}[ht]
\centering
\caption{PredictIt 2016: long-Trump share by trader's registered party.}
\label{tab:predictit_party}
\begin{tabular}{lrrr}
\toprule
Registered party & Users & Share-weighted long-Trump share & Total shares \\
\midrule
Democrat (D)             & 178 & 0.358 & 694{,}405 \\
Republican (R)           &  61 & 0.576 & 104{,}567 \\
Independent (I)          &   4 & 0.399 &  44{,}794 \\
Other party code or none recorded & 719 & 0.592 & 2{,}769{,}592 \\
\bottomrule
\end{tabular}
\end{table}

Two patterns differ from Kalshi. Registered Republicans placed 58\% of their shares long Trump and registered Democrats 36\%, a gap of 22 percentage points, several times the partisan-attribution gaps inferred from Kalshi. Registered Democrats therefore placed most of their shares on Clinton, the candidate who lost. The comparison rests on the 239 traders registered with one of the two major parties. The other 723 traders are not registered with a major party in the data. Of these, 431 have no party recorded, 288 are recorded under another party code, and 4 are coded as independent. The 719 traders in the first two groups placed 59\% of their shares long Trump, close to the registered Republicans. A user-level regression (Table~\ref{tab:predictit_lpm}) confirms the pattern. Registered Democrats place 18 percentage points fewer of their shares long Trump than otherwise comparable users. Even here, preference and demographic variables together explain only 13\% of the variation in trading direction. The verified Clinton-donor coefficient is positive ($+0.31$). Traders who gave money to Clinton bought Trump more heavily. We interpret this as hedging. A Clinton supporter who buys Trump contracts insures against the outcome she fears, and hedging, like profit-seeking, is a financial rather than expressive motive. The donor coefficient is therefore further evidence against the expressive account.

\begin{table}[ht]
\centering
\caption{PredictIt 2016: user-level LPM of long-Trump share.}
\label{tab:predictit_lpm}
\begin{tabular}{lrr}
\toprule
Variable & Coefficient & Std. Error \\
\midrule
Trump donor (FEC)               &  $0.131$              & 0.144  \\
Clinton donor (FEC)             &  $0.310$\rlap{$^{**}$} & 0.088  \\
Registered Republican           &  $0.019$              & 0.054  \\
Registered Democrat             & $-0.181$\rlap{$^{***}$} & 0.025  \\
Male                            & $-0.068$\rlap{$^{**}$} & 0.021  \\
Age                             & $-0.005$\rlap{$^{***}$} & 0.001  \\
Constant                        &  $0.818$\rlap{$^{***}$} & 0.043  \\
\midrule
$N$ users                        &     962 & \\
$R^2$                             &   0.134 & \\
\bottomrule
\end{tabular}
\\[0.25em]
\footnotesize Dependent variable: per-user share of purchased shares long Trump on the Trump--Clinton head-to-head market, 2016. Observations weighted by total shares purchased. $^{**}\,p<0.01$, $^{***}\,p<0.001$.
\end{table}

The PredictIt results reinforce and qualify the main finding. The direction is consistent with Kalshi, but the preference signal is stronger. On a small exchange with binding investment limits and a self-selected, politically active user base, the incentives mechanism operates less completely. Even so, registered Republicans placed 42\% of their shares on Clinton to win. The framework of Section~\ref{subsec:framework} explains this contrast between markets.
\begin{defbox}
Accuracy of prediction market prices is determined by the concentration of volume from traders whose trading activity does not depend on their own politics.
\end{defbox}
PredictIt's \$850 limit caps every trader's weight $\lambda_i$, so the dollar-weighted trader distribution cannot separate from the head-count distribution. \cite{zitzewitz2026} studies the price consequences of the same limit. PredictIt's prices in 2024 were less accurate than those of the uncapped exchanges, as expected if the limit prevents profit-motivated traders from holding enough of the volume to set the price. Over the final month PredictIt's national Trump probability averaged 53\%, between the poll-based models at 50 to 51\% and the uncapped exchanges at 57 to 60\%, and its Brier score (0.219) sat between the two groups as well (Section~\ref{subsec:brier}). On Kalshi, heavy traders whose positions do not follow their politics hold most of the volume, so the partisan gap among casual traders does not reach the price. On PredictIt no trader is permitted to hold enough of the volume to have that effect. The Iowa Electronic Markets operate under a tighter limit, with a maximum total investment of \$500 per trader.

\subsection{Interpretation}\label{subsec:interpretation}

The findings of this section reject a strong preference-driven account of trading behavior and support a graduated incentives-driven account. Under the preference account, traders' positions should track the politics of their communities, demographic variables should be substantial predictors of direction, and the casual traders most plausibly trading expressively should have leaned toward Trump given the platform's demographic skew. The data supports none of these predictions. The partisan gap in dollar-weighted trade direction shrinks with trade intensity and vanishes for traders with more than 100 trades, whose zip-level partisanship predicts neither direction nor outcomes (Section~\ref{subsec:geo}). Demographics and home-zip partisanship together explain under 8\% of cross-user variation in direction, even with trade-intensity controls (Section~\ref{subsec:lpm}). The most expressive cohort was 73\% long Harris, against the winner, and had the lowest returns, while the heavy cohort was 64\% long Trump, had the highest returns, and bought at prices that proved fair (Sections~\ref{subsec:direction} and \ref{subsec:profit}). The PredictIt replication shows the same mechanism operating less completely under a binding investment cap, as an incentives account predicts and a demographic account does not.

Appendix~\ref{app:model} states the result generally. The price converges to the true probability if and only if the financially weighted belief error vanishes, a condition that does not mention the trader distribution, and the price is biased only where a nonzero selection gap meets a nonzero dependence of belief on type at the dollar-weighted margin. On Kalshi the selection gap is large, the association between a trader's politics and the direction of trades is not distinguishable from zero where volume concentrates, and on the geographic dimension the gap has the sign that would have biased the price toward the loser. In equilibrium, prediction market prices reflect the beliefs of the marginal informed trader, regardless of the demographic composition of the trader pool. This is the marginal-trader hypothesis of \cite{forsythe-nelson-neumann-wright1992}, which was proposed to explain the accuracy of the Iowa Electronic Markets. Our analysis is the first direct confirmation of this hypothesis in a major election market.

There are two qualifications associated with this observation. The first concerns what the composition evidence can and cannot establish. It establishes that the composition of Kalshi's traders did not transmit to Kalshi's price through the direction of their trades. Weighted by dollar volume, the direction of trades among traders with more than 100 trades does not differ significantly by partisanship, sex, or age, and most of that volume comes from two-sided traders whose positions show no partisan gap. It does not establish where the price came from. Kalshi's price tracked the prices of Polymarket, the Betfair exchange, and the bookmakers closely (Section~\ref{section:accuracy}), and our data, which covers one exchange, cannot separate price discovery on Kalshi from the incorporation of prices set elsewhere. The agreement among the four market-based forecasters in Section~\ref{section:accuracy} is therefore not four independent observations, and the trader bases of the other three are unobserved. For the paper's question this does not matter. Whichever venue led, the price on each was held in place by profit-seeking capital, and a trader who keeps one exchange's price in line with public prices elsewhere is acting on the profit motive alone. For a claim about Kalshi's price in particular, the evidence shows that Kalshi's composition did not distort it. It does not identify the venue on which the price was first discovered.

The second is a set of scope limitations. The Kalshi evidence covers one election observed over its final month, the period when forecasts of every kind are most accurate, and the outcome is known, so the profit results carry an ex post component that the calibration scores only partly remove. Sex is imputed. The partisanship measures are ecological, zip-code aggregates rather than individual registration. The absence of a partisan gap among heavy traders also holds in zip codes that voted more than 75\% or less than 25\% Republican, and the PredictIt registration data shows that the zip code vote share tracks registered party closely (Appendix~\ref{app:robust}). Direction is measured from buy-side flows. The PredictIt replication, with individual registration and donor status, addresses the ecological concern in a second market and finds the same qualitative pattern, and the historical record extends the accuracy pattern but supplies no trader-level behavior. Each limitation narrows a claim, and none affects the core finding, the decline of every partisan and demographic difference in the direction of trades as trade intensity rises, which is measured within a single platform, at the individual level, under consistent definitions.

\section{Prediction Markets in Historical Context}\label{section:historical}

Election betting has been a prominent feature of American politics since at least the 1880s. \cite{rhode-strumpf2004, rhode-strumpf2013} document that organized betting on presidential elections was conducted openly on the curb exchanges and in the hotels of New York from the 1884 election through the early twentieth century, and that the markets declined by World War II. The largest market operated on the Curb Market just outside the New York Stock Exchange, where trading in equities gave way to political futures in the weeks before an election, and hundreds of millions of current-day dollars changed hands. Smaller cities ran their own markets, and contracts covered national, state, and local races. These markets differed from their modern descendants in ways that could shape who participated. Trading was physical rather than digital. There was no order book, so a bettor either transacted at a quoted price or walked away. Prominent political operatives, including the Tammany Hall boss Richard Croker, bet openly. Above all, the markets were not anonymous. Newspapers covered the betting extensively and reported the names, stakes, and odds of large bettors. Figure~\ref{fig:nyt1900} reproduces a \textit{New York Times} report from the day before the 1900 election that lists bettors by name alongside the prevailing odds, among them Croker, a municipal assemblyman, the secretary to the incoming New York governor, a railroad lobbyist, and two bookmakers, with individual wagers of \$10,000, roughly \$400,000 in 2026 dollars.\footnote{The legal status of the markets was unsettled, but no bettor or market was prosecuted. New York courts treated casual wagers between private individuals as legal but would not enforce betting disputes, and an election bet formally disqualified the bettor from voting.}

\begin{figure}[ht]
  \centering
  \includegraphics[width=0.9\textwidth]{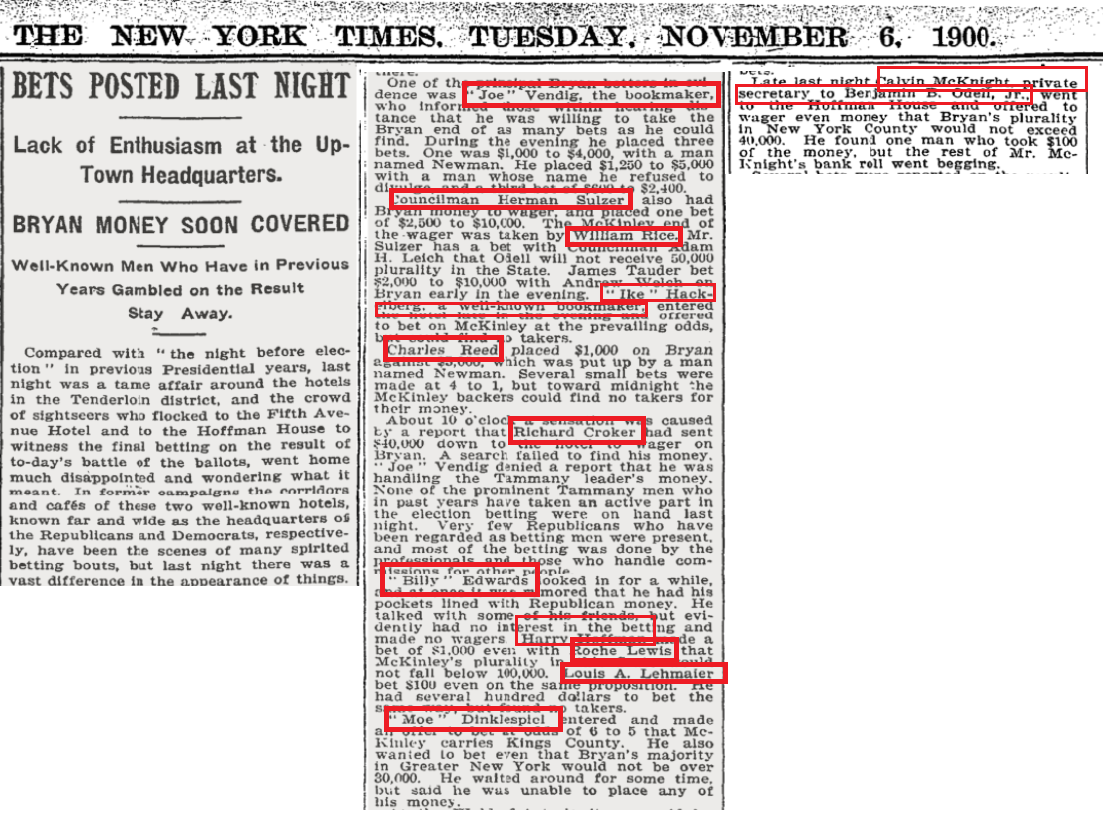}
  \caption{\textit{New York Times}, November 6, 1900. Boxed passages report the odds and name individual bettors with their stakes.}
  \label{fig:nyt1900}
\end{figure}

The published names make it possible to learn more about the demographics and socioeconomic status of the bettors listed.  We match the bettors named in New York newspaper accounts of the 1900 election to their records in the 1900 Census, yielding 80 matched bettors.\footnote{Census records are released after 72 years, so every Census through 1950 is available. Most of the 1890 population schedules were destroyed in a 1921 fire, which makes 1900, both a Census year and the start of the markets' peak period, the natural starting point. The 1900 election was a decisive re-election of William McKinley over William Jennings Bryan. Category denominators in Table~\ref{tab:hist1900} vary because not every Census field is populated for every matched bettor.} Table~\ref{tab:hist1900} and Figure~\ref{fig:hist1900} compare the matched bettors with the population of New York City in the same Census. Every bettor was male, and every bettor was white, which matched the city's male population (98\% white) and is the one dimension on which bettors and the city do not differ. On every other dimension they differ. The largest difference is in occupation, the Census's only proxy for income before 1940. Bankers or brokers were 44\% of bettors against 0.6\% of New York men, and 86\% were bankers, brokers, clerks, merchants, officials, manufacturers, or hoteliers, against 19\% of the city. Bettors were older, with a median age of 41 against 22 for New York men (32 among men aged 15 and over), and more likely to be married (77\% against 60\% of men aged 20 and over). They were more likely to be native born (74\% against 63\%), more likely to have a native-born father (41\% against 23\%), and, among the native born, far more likely to have been born outside New York State (28\% against 13\%). Among the foreign born, bettors came disproportionately from England and Canada (40\% of foreign-born bettors against 7\% of the city's foreign born) and not at all from Italy or Russia, which together supplied nearly a quarter of the city's immigrants. Chi-square tests of homogeneity reject equality with the city's distribution at the 5\% level on every dimension except race.

\begin{table}[p]
\centering
\caption{New York City election bettors in 1900 versus the New York City population, 1900 Census.}
\label{tab:hist1900}
\footnotesize\singlespacing
\renewcommand{\arraystretch}{0.95}
\begin{tabular}{lrr}
\toprule
 & Matched bettors (\%) & New York City (\%) \\
\midrule
\multicolumn{3}{l}{Occupation, males 10 and over: $\chi^2(6)=2{,}300$, $p<0.001$} \\
\quad Bankers and brokers & 44.4 & 0.6 \\
\quad Bookkeepers and clerks & 16.7 & 9.4 \\
\quad Merchants & 9.7 & 6.2 \\
\quad Officials & 5.6 & 0.4 \\
\quad Manufacturers & 5.6 & 1.9 \\
\quad Hotel keepers & 4.2 & 0.3 \\
\quad All other occupations & 13.9 & 81.2 \\
\midrule
\multicolumn{3}{l}{Age, males: $\chi^2(9)=70.7$, $p<0.001$} \\
\quad Under 15 & 0.0 & 30.9 \\
\quad 15--19 & 1.3 & 8.2 \\
\quad 20--24 & 11.4 & 9.5 \\
\quad 25--29 & 6.3 & 10.5 \\
\quad 30--34 & 12.7 & 9.7 \\
\quad 35--44 & 31.6 & 15.3 \\
\quad 45--54 & 19.0 & 8.6 \\
\quad 55--64 & 13.9 & 4.6 \\
\quad 65 and over & 3.8 & 2.5 \\
\quad Median age (years) & 41 & 22 \\
\midrule
\multicolumn{3}{l}{Marital status, males 20 and over: $\chi^2(2)=9.0$, $p=0.011$} \\
\quad Married & 76.9 & 60.4 \\
\quad Not married & 23.1 & 39.3 \\
\midrule
\multicolumn{3}{l}{Race, males: $\chi^2(2)=1.6$, $p=0.457$} \\
\quad White & 100.0 & 98.0 \\
\quad Black & 0.0 & 1.6 \\
\quad Other & 0.0 & 0.4 \\
\midrule
\multicolumn{3}{l}{Nativity, males: $\chi^2(1)=4.6$, $p=0.032$} \\
\quad Native born & 74.4 & 62.6 \\
\quad Foreign born & 25.6 & 37.4 \\
\midrule
\multicolumn{3}{l}{Father's nativity, all persons: $\chi^2(1)=13.4$, $p<0.001$} \\
\quad Native born & 40.8 & 23.1 \\
\quad Foreign born & 59.2 & 76.9 \\
\midrule
\multicolumn{3}{l}{State of birth among the native born, all persons: $\chi^2(12)=277.9$, $p<0.001$} \\
\quad New York & 72.4 & 87.3 \\
\quad Other states & 27.6 & 12.7 \\
\midrule
\multicolumn{3}{l}{Country of birth among the foreign born, all persons: $\chi^2(8)=38.2$, $p<0.001$} \\
\quad England & 30.0 & 5.4 \\
\quad Canada & 10.0 & 1.7 \\
\quad Germany & 20.0 & 25.4 \\
\quad Ireland & 15.0 & 21.7 \\
\quad Italy & 0.0 & 11.5 \\
\quad Russia & 0.0 & 12.2 \\
\quad Other countries & 25.0 & 22.2 \\
\bottomrule
\end{tabular}
\begin{minipage}{0.92\textwidth}
\vspace{0.4em}
\scriptsize
Notes: 80 bettors named in New York newspaper accounts of the November 1900 election and matched to the 1900 Census. All were male and all were white. The comparison population is New York City males in the 1900 Census except where the header says all persons, where the published Census tables do not separate the sexes. Median age for the city takes the midpoint of the median five-year interval. It is 32 among males aged 15 and over. Each chi-square statistic tests homogeneity of the bettor and city distributions over the categories shown, with age unknown, marital status unknown, and the detailed state and country lists included as in the source tables. Sources: 1900 Census, Volumes I and II (population by sex, age, conjugal condition, nativity, parentage, and occupation). Bettor matching by the authors.
\end{minipage}
\end{table}

\begin{figure}[ht]
  \centering
  \includegraphics[width=0.98\textwidth]{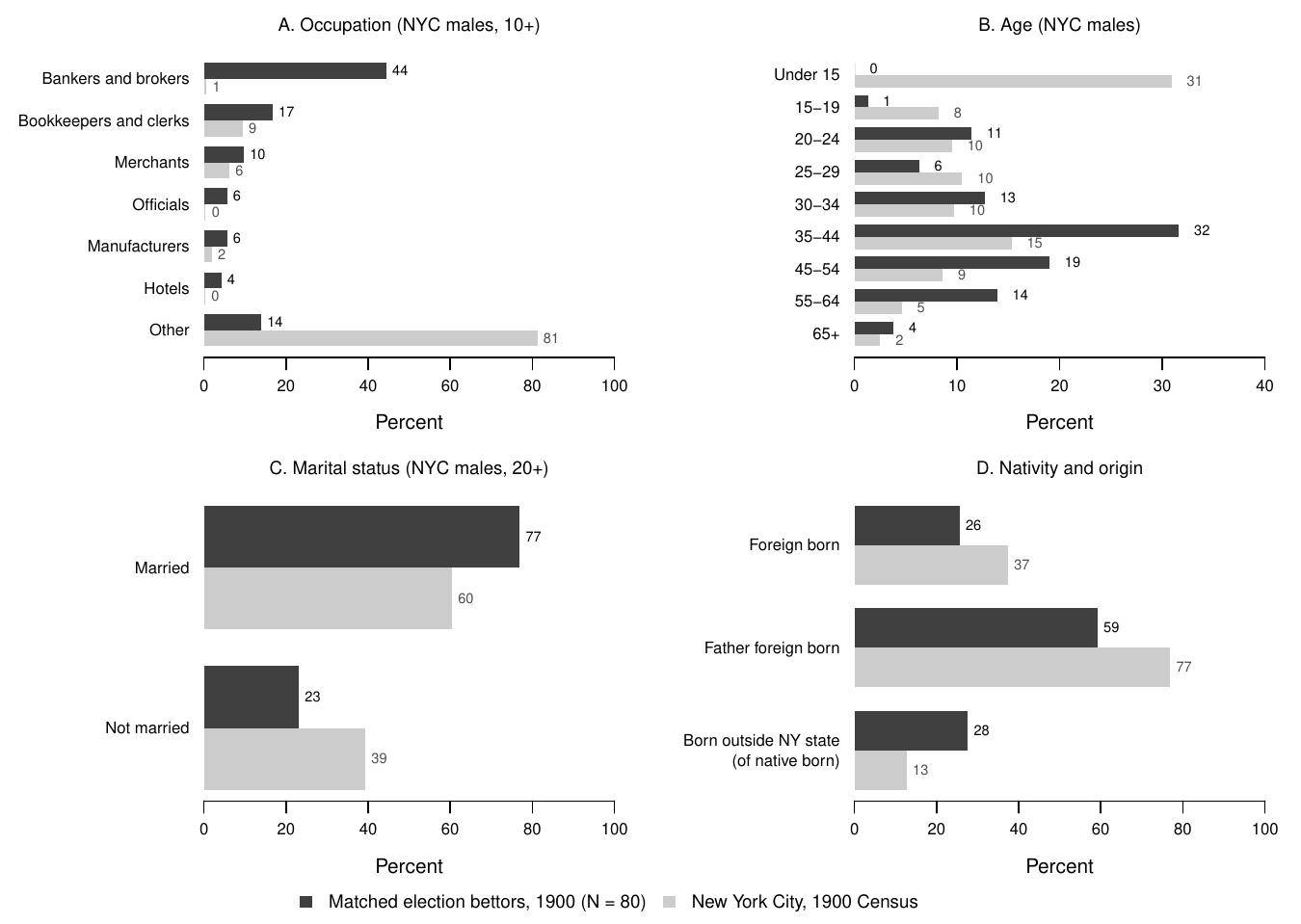}
  \caption{Matched election bettors of 1900 (dark bars, $N=80$) versus the New York City population in the 1900 Census (light bars). Panel D reports the foreign-born share, the share with a foreign-born father, and, among the native born, the share born outside New York State.}
  \label{fig:hist1900}
\end{figure}

Kalshi's traders are much younger than the electorate (Section~\ref{section:demographics}), whereas the 1900 bettors were much older than their city. Both markets were non-representative, in opposite directions. The markets driven by both groups were accurate. The particular way in which a market's participants fail to resemble the public is therefore not what governs its accuracy. The two sets of traders share a financial profile, not a demographic one.\footnote{Money was not the only thing wagered on the 1900 election. Newspapers also reported non-monetary bets, in which the loser agreed to wheel the winner down Broadway in a wheelbarrow or a baby carriage, push a peanut up a hill, walk to a distant city, marry, or shave. Because nothing but pride was at stake, almost anyone could make such a bet, and the people reported making them were drawn largely from the groups absent from the money markets. Participation in the money markets, not interest in the election, was what selected the narrow group who set prices.}
\begin{defbox}
In 1900 as in 2024, the people who set prices were those with capital to commit and a financial incentive to commit it prudently, regardless of their personal preferences.
\end{defbox}

The historical markets were also accurate, and accurate in an era without scientific polls (the first representative-sample polls were conducted in 1935) and with slow-moving news. \cite{rhode-strumpf2004} show that the New York betting markets correctly predicted the winner of every presidential election from 1884 to 1940 with one exception, that their implied probabilities were well calibrated, that lopsided winners were identified well before election day, that higher prices went with larger vote margins in national and local races alike, and that prices satisfied standard no-arbitrage conditions across the several New York venues and across cities.

The historical evidence supports our central finding. From the curb exchanges of Gilded Age New York to the regulated and blockchain platforms of the 2020s, the traders in both eras lacked demographic diversity, and the markets in both eras were accurate. 
 The 1900 evidence also removes several features of modern markets from the list of candidate explanations for their accuracy. Digital trading, anonymity, instantaneous news, and a young trader base were all absent in 1900, and the markets were accurate nonetheless. The pattern holds across very different market structures, regulatory environments, and electorates. 

\section{Conclusion}\label{section:conclusion}

This paper examines the relationship between demographic diversity and forecasting accuracy in political prediction markets. The paper has three findings.

First, the premise of the diversity critique is correct. Kalshi's trader base was predominantly young and male, a demographic profile far from that of the electorate whose behavior the market was forecasting.

Second, the predicted consequence never materialized. Kalshi's prices, like those of Polymarket, the Betfair exchange, and the bookmakers, were more accurate than every poll-based forecaster in the national race. Kalshi was more accurate than every poll-based forecaster in the battleground states and Senate races, and less accurate than FiveThirtyEight only in the thinly traded House districts. No poll-based model would have profited by trading against Kalshi's prices in any presidential contest group, and the markets correctly priced Trump as the favorite while poll-based models treated the race as a coin flip.

Third, the mechanism that reconciles these facts is financial incentives. The cohort of traders whose behavior most resembles political expression was 73\% long Harris and had the lowest returns. The cohort that dominates dollar volume was 64\% long Trump, had the highest returns, and, through the two-sided traders who supply most of its volume, bought at prices that proved on average fair. Under fractional partisan attribution by the home zip code's 2024 vote, the partisan gap in dollar-weighted trade direction falls from roughly seven percentage points among casual traders to zero among the traders with more than 100 trades who set prices, and the partisan gap in profitability declines in the same way. Trading profit is unrelated to sex, and among traders with more than 100 trades the direction of trades does not differ significantly by sex or age. The association of profit with geography and with age disappears once the side the trader took is accounted for. Formally, the price distortion that selected participation can produce is the product of a selection gap and the dependence of beliefs on type at the dollar-weighted margin. The gap is large, the association between politics and the direction of trades is not distinguishable from zero where volume concentrates, and on geography the gap has the opposite sign from the one the critics assume. The data is inconsistent with a model in which traders buy the outcome they prefer and consistent with one in which the market price reflects the beliefs of the marginal informed trader, the mechanism \cite{forsythe-nelson-neumann-wright1992} conjectured for the Iowa Electronic Markets and which our evidence confirms directly.

These findings challenge the widely held view, popularized by \cite{surowiecki2004}, that demographic diversity is necessary for accurate collective judgment.
\begin{defbox}
In prediction markets, demographic diversity is overruled by financial incentives. More generally, our analysis supports the claim that the structure of the institution in which a group operates matters more than the composition of the group.
\end{defbox}
The findings also answer the most common criticism leveled at prediction markets in 2024. Critics who inferred biased prices from an unrepresentative trader base were implicitly assuming that traders' positions follow their identities. The evidence rejects that assumption, and a century of evidence from demographically narrow, persistently accurate election markets suggests that the market mechanism does not require representative participants. A regulated exchange open to the public, in its first month of election trading, produced prices that were more accurate than every poll-based forecast and that were not noticeably biased by the inferred political preferences of its traders.

\appendix
\section{A Model of Selected Participation}\label{app:model}

This appendix sets out the model summarized in Section~\ref{subsec:framework}. Its purpose is to state when a prediction market whose participants are a systematically nonrepresentative subset of voters can nevertheless forecast the election, and to separate two notions of forecasting success that the empirical sections keep apart. The mapping from market prices to average beliefs depends in general on market structure, wealth, and risk preferences \citep{wolfers-zitzewitz2004, manski2006, wolfers-zitzewitz2006, he-treich2017}. The model here isolates the distinct issue of selected participation under a transparent aggregation benchmark.

\subsection{Environment}

There are $N$ voters. Voter $i$ has a type $v_i \in \mathbb{R}$, drawn independently from a distribution $F_V$ with finite second moment. Let $e$ denote an aggregate electoral state. Voter $i$ votes for candidate A when
\[
Y_i = \mathbbm{1}\{v_i + \beta(v_i)\,e + \varepsilon_i \ge 0\},
\]
where $\varepsilon_i$ is an idiosyncratic shock with $G(x) = P(\varepsilon_i \ge -x)$, so that $P(Y_i = 1 \mid v_i, e) = G(v_i + \beta(v_i)e)$. The expected vote share in state $e$ is $S(e) = \int G(v + \beta(v)e)\,dF_V(v)$, assumed continuous and strictly increasing in $e$, with a unique $e^*$ at which $S(e^*) = 1/2$. The realized vote share is $S_N = N^{-1}\sum_i Y_i$ and the realized winner is $W_N = \mathbbm{1}\{S_N > 1/2\}$.

\subsection{Selected participation}

Let $T_i \in \{0,1\}$ indicate that voter $i$ trades in the market, with $P(T_i = 1 \mid v_i = v) = \pi(v) \in (0,1]$. Participation may depend on type in any way. The distribution of types among traders is
\[
dF_T(v) = \frac{\pi(v)\,dF_V(v)}{E[\pi(v)]},
\]
and the \emph{selection gap} is $\Delta_T = E[v \mid T = 1] - E[v]$. Section~\ref{section:demographics} documents that $\Delta_T$ is large on every observed dimension of type.

\subsection{Market aggregation}

A binary contract pays one dollar if A wins. Let $p_N$ be its price and $q_i$ trader $i$'s subjective probability that A wins. Under a linear-demand benchmark, trader $i$ demands $x_i = \lambda_i(q_i - p_N)$ with $\lambda_i > 0$, and market clearing, $\sum_{i: T_i = 1} \lambda_i(q_i - p_N) = 0$, gives
\begin{equation}\label{eq:price}
p_N = \frac{\sum_{i: T_i=1} \lambda_i\, q_i}{\sum_{i: T_i=1} \lambda_i}.
\end{equation}
The price is the financially weighted average belief. With unequal stakes the relevant distribution of types is the dollar-weighted one, $dF_{\$}(v) \propto \lambda(v)\pi(v)\,dF_V(v)$, and the financially weighted selection gap $\Delta_{\$}$ is the selection gap under $F_{\$}$. Equation~\eqref{eq:price} is a benchmark rather than a universal equilibrium relationship. With heterogeneous budgets or risk preferences the price need not equal the simple average belief \citep{manski2006, he-treich2017}. Nothing below depends on the linear form beyond the price being a positively weighted average of beliefs.

\subsection{Forecast objects}

Let $\mathcal{I}_N$ be the information set on which the market forecast is conditioned immediately before the outcome is realized, and let $q^*_N = P(W_N = 1 \mid \mathcal{I}_N)$ be the true conditional probability of victory. We distinguish two notions of success. The market is \emph{probability consistent} if $p_N - q^*_N \to_p 0$. Writing $\widehat{W}_N = \mathbbm{1}\{p_N > 1/2\}$ for the market-implied winner, the market is \emph{winner consistent} if $P(\widehat{W}_N = W_N) \to 1$. Define the financially weighted belief error
\begin{equation}\label{eq:B}
B_N = \frac{\sum_{i: T_i=1} \lambda_i\,(q_i - q^*_N)}{\sum_{i: T_i=1} \lambda_i},
\end{equation}
so that under \eqref{eq:price}, $p_N - q^*_N = B_N$ identically.

\subsection{Belief formation and the source of price distortion}

The following results treat four mutually exclusive accounts of how traders form $q_i$ and identify in each the term that must vanish for the price to be unbiased. None of the conditions is that $F_T = F_V$.

\begin{proposition}[Forecasting from own type]\label{prop:owntype}
Suppose $q_i = h(v_i)$ and stakes are equal. Let $\bar q_V = E[h(v)]$ be the average belief a representative trader population would hold. Then
\[
p_N - \bar q_V = \frac{\mathrm{Cov}\big(\pi(v), h(v)\big)}{E[\pi(v)]}.
\]
\end{proposition}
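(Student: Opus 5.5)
With equal stakes the price formula \eqref{eq:price} reduces to the plain average of beliefs over traders. Writing $M_N=\sum_i T_i$ for the number of traders, it reads
\[
p_N=\frac{\sum_{i} T_i\,h(v_i)}{\sum_i T_i}.
\]
The plan is to show that this sample ratio converges to the population ratio $E[\pi(v)h(v)]/E[\pi(v)]$. The claimed formula is then a covariance identity applied to that limit. I will read the statement as holding in the large-$N$ limit, or equivalently as an identity for the population counterpart of $p_N$. The proposition says nothing about finite-$N$ noise, so I will state this reading explicitly.

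\textbf{Step 1 (law of large numbers).} The pairs $(v_i,T_i)$ are i.i.d. across voters, because $v_i\sim F_V$ independently and $T_i$ is drawn given $v_i$ with probability $\pi(v_i)$. I divide numerator and denominator by $N$.

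The denominator $N^{-1}\sum_i T_i$ converges almost surely to $E[T]$. By iterated expectations, $E[T]=E[E[T\mid v]]=E[\pi(v)]$, which is strictly positive since $\pi>0$.

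The numerator $N^{-1}\sum_i T_i h(v_i)$ converges to $E[T h(v)]=E[\pi(v)h(v)]$, again by conditioning on $v$. This needs $E|\pi(v) h(v)|<\infty$. Boundedness of $h$ supplies it, because $h$ is a probability and so takes values in $[0,1]$.

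The continuous mapping theorem then gives
\[
p_N\to\frac{E[\pi(v)h(v)]}{E[\pi(v)]}=E_T[h(v)],
\]
the mean belief under $F_T$. This matches the definition of $dF_T$ directly.

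\textbf{Step 2 (covariance identity).} I expand
\[
E[\pi h]=\mathrm{Cov}(\pi,h)+E[\pi]\,E[h]=\mathrm{Cov}(\pi,h)+E[\pi]\,\bar q_V,
\]
with $\bar q_V=E[h(v)]$. Dividing by $E[\pi]$ and subtracting $\bar q_V$ gives
\[
p_N-\bar q_V=\frac{\mathrm{Cov}\big(\pi(v),h(v)\big)}{E[\pi(v)]}
\]
in the limit, which is the displayed formula. A final remark will note what this says: the distortion is zero exactly when participation is uncorrelated with the type-induced belief, and $F_T\neq F_V$ alone does not produce it.

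\textbf{Main obstacle.} The obstacle is interpretive rather than technical. The statement writes an equality for $p_N$, but at finite $N$ that equality fails by sampling noise. Also, $p_N$ is undefined if no voter trades, an event whose probability vanishes as $N$ grows. I will handle both by proving the equality for the probability limit, or for the population price $E[\pi h]/E[\pi]$. I will add that the finite-$N$ error is $O_p(N^{-1/2})$ by the CLT and the delta method, using that $\pi$ and $h$ are bounded.
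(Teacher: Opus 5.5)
Your proposal is correct and follows the argument the paper leaves implicit (it gives no written proof): with equal stakes the price is the mean of $h$ under $F_T$, namely $E[\pi h]/E[\pi]$, and the expansion $E[\pi h]=\mathrm{Cov}(\pi,h)+E[\pi]E[h]$ gives the formula. Your Step 1 adds a law-of-large-numbers link from the finite-$N$ price to this population quantity, which the paper skips by treating $p_N$ as its population counterpart; that step uses independence of the participation draws $T_i$ across voters, which the model assumes only implicitly.
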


\noindent Selected participation distorts the price only through the covariance between participation and the belief that type induces. With unequal stakes $\pi$ is replaced by $\lambda\pi$. Section~\ref{subsec:geo} estimates this covariance at the zip-code level and finds it near zero. Weighting zip codes by residents rather than by traders moves the long-Republican share of trading by under two percentage points.

\begin{proposition}[Representativeness is not necessary]\label{prop:correct}
If every trader holds the correct belief, $q_i = q^*_N$, then $p_N = q^*_N$ for any participation function $\pi$ and any positive weights $\lambda_i$.
\end{proposition}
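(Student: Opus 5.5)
The plan is to substitute directly into the price formula \eqref{eq:price}, or equivalently into the belief error \eqref{eq:B}, and use the identity $p_N - q^*_N = B_N$ established just before the propositions. If every participating trader holds $q_i = q^*_N$, then every summand $\lambda_i(q_i - q^*_N)$ in the numerator of $B_N$ vanishes. The denominator $\sum_{i:T_i=1}\lambda_i$ is strictly positive because each $\lambda_i>0$. Hence $B_N = 0$ and $p_N = q^*_N$.

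The one point requiring care is that the denominator must be nonzero, which needs at least one trader. Since $\pi(v)\in(0,1]$ for all $v$, the event that no voter trades has probability $E[1-\pi(v)]^N$, which is finite-$N$ positive but not a problem for the statement. On that event the price is undefined, so I would state the result on the event $\{\sum_i T_i \ge 1\}$, which has probability tending to one. Alternatively I would note that the claim is understood whenever a market price exists.

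Equivalently, from \eqref{eq:price} directly, $p_N = q^*_N \sum\lambda_i / \sum\lambda_i = q^*_N$. The identity holds pathwise: for every realization of participation $\{T_i\}$ and every positive weights. It therefore does not depend on $\pi$, on $F_T$ versus $F_V$, or on the stake distribution $F_{\$}$, and the selection gaps $\Delta_T$ and $\Delta_{\$}$ never enter.

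I expect no real obstacle. The only substantive remark to add is the interpretive one: the conclusion uses only that the price is a positively weighted average of beliefs. It therefore extends beyond the linear-demand benchmark to any aggregation rule with that property, as the text after \eqref{eq:price} anticipates. Probability consistency (and hence winner consistency whenever $q^*_N$ stays bounded away from $1/2$) follows immediately.
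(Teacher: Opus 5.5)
Your argument is correct and is the one the paper relies on: with $q_i = q^*_N$ every term in the numerator of \eqref{eq:B} vanishes (equivalently, \eqref{eq:price} is a positively weighted average of a constant), so $p_N = q^*_N$ pathwise for any $\pi$ and any $\lambda_i > 0$, and your caveat that at least one trader must be present is a reasonable addition. Your closing aside is wrong, however: $p_N = q^*_N$ with $q^*_N$ bounded away from $1/2$ only gives $\widehat W_N = W^*_N$, whereas winner consistency also requires the outcome-concentration condition (iii) of Theorem~\ref{thm:winner}, $E[\min\{q^*_N, 1-q^*_N\}] \to 0$; for instance, if $q^*_N \equiv 0.6$ the market misnames the winner with probability $0.4$ for every $N$, as the paper stresses just before that theorem.
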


\begin{proposition}[Inference from the trader population]\label{prop:pool}
Suppose traders treat the trader distribution $F_T$ as the electorate and forecast the vote share as $\widehat S_T(e) = \int G(v + \beta(v)e)\,dF_T(v)$. Then
\[
\widehat S_T(e) - S(e) = \frac{\mathrm{Cov}\big(\pi(v), G(v + \beta(v)e)\big)}{E[\pi(v)]}.
\]
\end{proposition}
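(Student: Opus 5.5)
The plan is a direct change-of-measure computation, parallel to the one behind Proposition~\ref{prop:owntype}. Fix $e$ and write $g(v) = G(v + \beta(v)e)$. The function $g$ takes values in $[0,1]$ because $G$ is a probability, so every expectation below exists, and $\pi$ is bounded in $(0,1]$. By the definition of the trader distribution,
\[
\widehat S_T(e) = \int g(v)\,dF_T(v) = \frac{\int \pi(v)\,g(v)\,dF_V(v)}{E[\pi(v)]} = \frac{E[\pi(v)\,g(v)]}{E[\pi(v)]},
\]
where all expectations are taken under $F_V$. The normalizing constant $E[\pi(v)]$ is strictly positive because $\pi > 0$ pointwise, so this step is legitimate.

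Next, expand the product moment as $E[\pi g] = \mathrm{Cov}(\pi, g) + E[\pi]\,E[g]$. Since $E[g] = \int G(v+\beta(v)e)\,dF_V(v) = S(e)$ by definition, dividing by $E[\pi]$ gives
\[
\widehat S_T(e) = S(e) + \frac{\mathrm{Cov}\big(\pi(v), G(v + \beta(v)e)\big)}{E[\pi(v)]}.
\]
Subtracting $S(e)$ yields the stated identity. It holds pointwise in $e$, with no assumption on the shape of $\pi$.

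There is no real obstacle. The only points to check are integrability, which follows from the boundedness of $g$ and $\pi$, and positivity of $E[\pi]$. I would also add a remark translating the result into price terms. Because $S$ is continuous and strictly increasing, a positive covariance near $e^*$ shifts the perceived tipping state $\widehat e^*$ (where $\widehat S_T(\widehat e^*) = 1/2$) below $e^*$, which inflates beliefs in A's victory. Conversely, if the covariance vanishes at $e^*$, traders who treat the trader pool as the electorate make no error at the decisive margin, however different $F_T$ is from $F_V$. This remark is interpretive and is not needed for the identity itself.
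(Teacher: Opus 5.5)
Your proof is correct and matches the paper's approach. The paper states the proposition without a written proof, and the identity follows, as for Proposition~\ref{prop:owntype}, from your reweighting $\widehat S_T(e) = E[\pi g]/E[\pi]$ under $F_V$ and the split $E[\pi g] = \mathrm{Cov}(\pi, g) + E[\pi]\,E[g]$ with $E[g] = S(e)$. One quibble about your optional remark: moving the perceived tipping point below $e^*$ requires $\widehat S_T$ itself to be increasing (true, for instance, if $\beta \ge 0$, since $G$ is nondecreasing), and this does not follow from the paper's assumption that $S$ is increasing.
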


\noindent The distortion vanishes when the two distributions imply the same election-relevant functional, which can hold when they differ on dimensions that do not alter it. Representativeness is sufficient for this but not necessary.

\begin{proposition}[Own-type projection]\label{prop:projection}
Let $\mu = E[v]$ and suppose trader $i$ perceives the mean type as $\widehat\mu_i = \mu + \rho(v_i - \mu) + \eta_i$ with $E[\eta_i \mid v_i] = 0$, and forms $q_i = Q(\widehat\mu_i)$ for a smooth $Q$. To first order, with equal stakes,
\[
p_N - Q(\mu) \simeq Q'(\mu)\,\rho\,\Delta_T,
\]
and with unequal stakes $\Delta_T$ is replaced by $\Delta_{\$}$.
\end{proposition}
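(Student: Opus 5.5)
The plan is to work from the price equation \eqref{eq:price} and treat the price as a population average over traders in the large-$N$ limit. With equal stakes, $p_N$ is the sample mean of $q_i$ over traders. By the law of large numbers applied to the i.i.d. pairs $(v_i,\eta_i)$ with $T_i$ drawn independently given $v_i$, this mean converges to $E[q \mid T=1] = E[\pi(v)\,Q(\widehat\mu(v,\eta))]/E[\pi(v)]$. The claim is therefore really about the trader-population expectation $E[Q(\widehat\mu)\mid T=1] - Q(\mu)$. I will state it in that form and read ``$\simeq$'' as equality up to the LLN error and second-order terms.

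First, I linearize $Q$ around $\mu$: $Q(\widehat\mu_i) = Q(\mu) + Q'(\mu)(\widehat\mu_i - \mu) + R_i$, with $|R_i| \le \tfrac12 \sup|Q''|\,(\widehat\mu_i-\mu)^2$. Next, I take the expectation over the trader distribution $F_T$:
\[
E[\widehat\mu - \mu \mid T=1] = \rho\,(E[v\mid T=1]-\mu) + E[\eta \mid T=1].
\]
The first term is $\rho\,\Delta_T$ by definition of the selection gap. For the second, participation depends on $v$ only, and $\eta$ is mean-zero given $v$. So $E[\eta\mid T=1] = E[\pi(v)E[\eta\mid v]]/E[\pi(v)] = 0$. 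Here I use the conditional-independence assumption implicit in $P(T_i=1\mid v_i=v)=\pi(v)$, namely that $T_i$ is independent of $\eta_i$ given $v_i$. I will state this explicitly. Combining the two gives $p_N - Q(\mu) = Q'(\mu)\rho\Delta_T + E[R\mid T=1] + o_p(1)$.

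For unequal stakes, I repeat the argument with $\lambda(v)\pi(v)$ in place of $\pi(v)$, exactly as in the remark after Proposition~\ref{prop:owntype}. The weighted mean converges to the expectation under $F_{\$}$, the $\eta$ term vanishes by the same conditioning argument provided $\lambda$ depends on $v$ only, and $\Delta_T$ becomes $\Delta_{\$}$.

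The main obstacle is making ``to first order'' precise. The remainder $E[R\mid T=1]$ is of order $E[(\widehat\mu-\mu)^2\mid T=1] = \rho^2 E[(v-\mu)^2\mid T=1] + E[\eta^2\mid T=1]$. This is small only when projection and noise are small, so the statement is an expansion in $\rho$, $\Delta$, and $\mathrm{Var}(\eta)$, not an exact identity. I would handle this by assuming $Q''$ is bounded and the second moments are finite, the latter of which the environment already imposes on $v$. The error is then bounded by $\tfrac12\sup|Q''|$ times that second moment. This makes explicit that the noise $\eta$ contributes only a second-order Jensen term. That term is zero when $Q$ is linear, in which case the formula is exact up to sampling error.
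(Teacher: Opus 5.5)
Your argument is correct and is the derivation the paper leaves implicit. The paper states this proposition without proof, and its ``to first order'' means exactly what you do: linearize $Q$ at $\mu$, average under $F_T$ (or $F_{\$}$), and remove the $\eta$ term using $E[\eta_i \mid v_i]=0$ together with participation and stakes that are independent of $\eta_i$ given $v_i$. You also state assumptions the paper uses tacitly: that conditional independence, a law of large numbers for the trader average of $\eta_i$, and bounded $Q''$, so that the omitted term is the second-order Jensen remainder.
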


\noindent This is the account that underlies the diversity critique. Traders see the electorate through their own type ($\rho > 0$) and the trader population is unrepresentative ($\Delta_T \neq 0$). The distortion is the product. It is zero if either factor is zero at the financially weighted margin, and its sign is the sign of $\rho\,\Delta_{\$}$. Sections~\ref{subsec:geo} and \ref{subsec:lpm} estimate $\rho$ by trade intensity. Section~\ref{section:demographics} reports $\Delta_{\$}$ on the geographic dimension and finds it negative.

\subsection{Probability consistency}

\begin{theorem}[Probability consistency without representativeness]\label{thm:prob}
Under the aggregation benchmark \eqref{eq:price}, the market is probability consistent if and only if $B_N \to_p 0$. The condition does not require $F_T = F_V$.
\end{theorem}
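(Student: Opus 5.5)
The argument is essentially definitional, so I will organize it around the identity noted just after \eqref{eq:B}. First, I will verify that identity. Under the clearing condition, $p_N$ is the $\lambda$-weighted mean of the $q_i$ over traders, and the weights sum to $\sum_{i:T_i=1}\lambda_i$. Subtracting $q^*_N$, which does not depend on $i$, therefore gives
\[
p_N - q^*_N = \frac{\sum_{i:T_i=1}\lambda_i q_i}{\sum_{i:T_i=1}\lambda_i} - \frac{\sum_{i:T_i=1}\lambda_i q^*_N}{\sum_{i:T_i=1}\lambda_i} = B_N .
\]
This step needs the trader set to be nonempty so that the denominator is positive. I will handle that by working on the event $\{\sum_i T_i \ge 1\}$. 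Because $\pi(v)>0$ and types are drawn independently, this event has probability $1-(1-E[\pi(v)])^N \to 1$, so its complement is irrelevant for convergence in probability. This technical point is the only place where any care is needed.

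Given the identity, both directions are immediate. Probability consistency is by definition the statement $p_N - q^*_N \to_p 0$. On events of probability tending to one this random variable equals $B_N$, so $p_N-q^*_N\to_p 0$ holds if and only if $B_N\to_p 0$. To make this rigorous I will bound $P(|p_N-q^*_N|>\epsilon)$ by $P(|B_N|>\epsilon)$ plus the probability that no one trades, and symmetrically with the roles of the two variables reversed.

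For the second sentence of the theorem, I will observe that neither the definition of $B_N$ nor the condition $B_N \to_p 0$ refers to $F_T$ or $F_V$. To show the condition can hold when $F_T\neq F_V$, I will invoke Proposition~\ref{prop:correct}. Choose any nonconstant $\pi$, so that $\Delta_T\neq 0$ and hence $F_T\neq F_V$, and let every trader hold $q_i=q^*_N$. Then $B_N\equiv 0$, so the market is probability consistent while being unrepresentative. This exhibits sufficiency without representativeness and completes the plan.
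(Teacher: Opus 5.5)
Your proposal is correct and follows the paper's proof, which consists of the identity $p_N - q^*_N = B_N$ read off from \eqref{eq:price} and \eqref{eq:B}. Your treatment of the no-trader event and the witness from Proposition~\ref{prop:correct} add care the paper leaves implicit. There is one small slip: a nonconstant $\pi$ need not give $\Delta_T \neq 0$ (take $F_V$ symmetric about $\mu$ and $\pi$ depending only on $|v-\mu|$). You only need $F_T \neq F_V$, however, and that follows directly from $dF_T = \pi\,dF_V/E[\pi(v)]$ whenever $\pi$ is not $F_V$-a.s.\ constant.
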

\begin{proof}
By \eqref{eq:price} and \eqref{eq:B}, $p_N - q^*_N = B_N$.
\end{proof}

Theorem~\ref{thm:prob} is a high-level condition. Individual belief errors must disappear after participation selection and financial weighting. Propositions~\ref{prop:owntype} to \ref{prop:projection} give its primitive form under each behavioral account. We make two remarks. First, a large market is not sufficient. If selection systematically shifts the dollar-weighted distribution of beliefs, the law of large numbers drives the price precisely to an incorrect probability. The price depends on the dollar-weighted belief distribution, not on the number of traders. Second, if $B_N \to_p B \neq 0$, the price converges to $q^*_N + B$ and probability consistency fails. The calibration score of Section~\ref{subsec:profit} is an observable counterpart of $B_N$ for each cohort.

\subsection{Winner consistency}

Probability consistency does not imply winner consistency. If $q^*_N$ remains bounded away from zero and one, the election retains irreducible uncertainty, and even a market with $p_N = q^*_N$ exactly names the realized winner with probability bounded below one. Winner consistency requires two further conditions.

\begin{theorem}[Winner consistency]\label{thm:winner}
Suppose (i) there is a deterministic sequence $a_N \downarrow 0$ with $P(|p_N - q^*_N| > a_N) \to 0$, (ii) $P(|q^*_N - 1/2| \le a_N) \to 0$, and (iii) $E[\min\{q^*_N, 1 - q^*_N\}] \to 0$. Then $P(\widehat W_N = W_N) \to 1$, whether or not $F_T = F_V$.
\end{theorem}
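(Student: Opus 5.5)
We bound the error event directly. The market names the wrong winner only in one of two ways. Either the price lands on the wrong side of $1/2$ relative to the true conditional probability, or the true conditional probability names the wrong winner. Write
\[
\{\widehat W_N \neq W_N\} \subseteq \{\mathbbm{1}\{p_N > 1/2\} \neq \mathbbm{1}\{q^*_N > 1/2\}\} \cup \{\mathbbm{1}\{q^*_N > 1/2\} \neq W_N\},
\]
and show that each event has probability tending to zero. Representativeness enters nowhere. By Theorem~\ref{thm:prob} and \eqref{eq:B}, the only link between price and truth is $p_N - q^*_N = B_N$, and condition (i) is a rate statement about that quantity, whatever $F_T$ is.

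\emph{First event.} If $p_N$ and $q^*_N$ fall on opposite sides of $1/2$ (counting ties at $1/2$ as a boundary case), then $|p_N - 1/2| \le |p_N - q^*_N|$. Therefore either $|p_N - q^*_N| > a_N$ or $|q^*_N - 1/2| \le a_N$. The first has vanishing probability by (i), and the second by (ii). The tie case $q^*_N = 1/2$ is contained in the event of (ii), so it needs no separate treatment.

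\emph{Second event.} Condition on $\mathcal{I}_N$. Given $\mathcal{I}_N$, the probability that $W_N$ differs from $\mathbbm{1}\{q^*_N > 1/2\}$ equals $1-q^*_N$ when $q^*_N > 1/2$ and equals $q^*_N$ when $q^*_N \le 1/2$. In both cases it is at most $\min\{q^*_N, 1-q^*_N\}$, where the case $q^*_N = 1/2$ gives exactly $1/2$. Taking expectations, the probability of the second event is at most $E[\min\{q^*_N, 1-q^*_N\}]$, which tends to zero by (iii). This step requires that $q^*_N = P(W_N = 1 \mid \mathcal{I}_N)$ be a genuine conditional probability, so that the tower property applies.

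A union bound then gives $P(\widehat W_N \neq W_N) \to 0$. The main point of care is the boundary bookkeeping in the first event. The bound $|p_N - 1/2| \le |p_N - q^*_N|$ holds only on the disagreement event, so the case split must be stated cleanly, including what happens when $p_N = 1/2$ exactly. Note that (iii) does not imply (ii) on its own, because $q^*_N$ could sit near $1/2$ with small probability. This is why (ii) is needed separately. The final remark, that the result holds whether or not $F_T = F_V$, follows because no step uses the participation function $\pi$.
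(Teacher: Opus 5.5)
Your proof is correct and follows the paper's route: both pass through the Bayes classifier $W^*_N=\mathbbm{1}\{q^*_N>1/2\}$, use (i) and (ii) to get $P(\widehat W_N\neq W^*_N)\le P(|p_N-q^*_N|>a_N)+P(|q^*_N-1/2|\le a_N)$, use (iii) to get $P(W^*_N\neq W_N)=E[\min\{q^*_N,1-q^*_N\}]$ (the paper has equality here, which your case analysis also gives), and finish with a union bound. Two small points need correcting.

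\textbf{The first-event inequality.} The inequality you need is $|q^*_N-1/2|\le|p_N-q^*_N|$. It holds on the disagreement event because $1/2$ lies weakly between $p_N$ and $q^*_N$. The inequality you wrote, $|p_N-1/2|\le|p_N-q^*_N|$, is also true, but with the triangle inequality it only gives $|q^*_N-1/2|\le 2a_N$. That is not literally condition (ii), so your ``therefore'' does not follow as written.

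\textbf{The closing remark.} Your claim that (iii) does not imply (ii) is false. On the event $\{|q^*_N-1/2|\le a_N\}$ one has $\min\{q^*_N,1-q^*_N\}\ge 1/2-a_N$. Once $a_N<1/2$, Markov's inequality then gives
\[
P(|q^*_N-1/2|\le a_N)\le \frac{E[\min\{q^*_N,1-q^*_N\}]}{1/2-a_N}\to 0 .
\]
So (ii) follows from (iii) and is a redundant, though harmless, hypothesis rather than a needed one. If $q^*_N$ sits near $1/2$ with non-vanishing probability, (iii) itself fails. The same bound would also cover the $2a_N$ version from your inequality.
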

\begin{proof}
Let $W^*_N = \mathbbm{1}\{q^*_N > 1/2\}$. By (i) and (ii), $P(\widehat W_N \neq W^*_N) \le P(|p_N - q^*_N| > a_N) + P(|q^*_N - 1/2| \le a_N) \to 0$. By (iii), $P(W^*_N \neq W_N) = E[\min\{q^*_N, 1-q^*_N\}] \to 0$. The union bound gives the result.
\end{proof}

Condition (iii), outcome concentration, has a primitive foundation in a large electorate. If, conditional on $(e, v_1, \dots, v_N)$, the idiosyncratic shocks are independent with bounded variances, then $S_N - S(e) \to_p 0$. If in addition the limiting election is not at the knife's edge, $P(|S(e) - 1/2| > \kappa) \to 1$ for some $\kappa > 0$, and the market information set reveals $e$ or enough of it to determine $S(e)$, then $q^*_N \to_p \mathbbm{1}\{S(e) > 1/2\}$ and (iii) holds. Conversely, if $\liminf_N E[\min\{q^*_N, 1 - q^*_N\}] > 0$, no forecast measurable with respect to $\mathcal{I}_N$ can be winner consistent, because the Bayes classifier $W^*_N$ is the best such forecast and its error is exactly that expectation.

The distinction organizes the accuracy evidence of Section~\ref{section:accuracy}. That the markets named the winner in 2024 is one realization of $W_N$, and because the down-ballot contests share the aggregate state $e$, the battleground and Senate calls are largely the same realization. The Brier scores, and the trader-level calibration scores, speak to probability consistency, which is the property that selected participation could have violated and did not.

\subsection{Parametric illustration}

If $v \sim N(\mu, \sigma_v^2)$ and $\pi(v) \propto e^{\gamma v}$, then $\Delta_T = \gamma\sigma_v^2$ and, under Proposition~\ref{prop:projection}, $p_N - Q(\mu) \simeq Q'(\mu)\,\rho\,\gamma\,\sigma_v^2$. Selection strength $\gamma$, type dispersion $\sigma_v^2$, and projection $\rho$ enter multiplicatively. However strong the selection, the price is undistorted when the traders who carry the financial weight do not project.

\section{Robustness of the Partisan Attribution}\label{app:robust}

This appendix reports the three checks summarized in Section~\ref{subsec:geo}. The attribution treats a zip code's vote share as the probability that a resident trader is a Republican. Traders are a selected draw from their zip codes, so the measure could understate individual partisanship, and understate it unevenly if the home zip codes of heavy traders say less about their politics than those of casual traders. Three checks address the concern (Table~\ref{tab:homog}). First, restricting to homogeneous zip codes, those that voted more than 75\% or less than 25\% Republican, where residents are politically homogeneous, the raw gap in dollar-weighted direction between Red and Blue residents is 31 percentage points for one-trade users and 37 percentage points for the 2 to 10 cohort, several times the attribution gap, and it is $-2.5$ percentage points ($t=-0.4$) for the 101+ cohort. Attenuation compresses the casual-cohort gaps. It does not create the absence of a gap in the most active cohort. Second, the home zip codes of heavy traders are no less informative than those of casual traders. The Bernoulli variance of type given zip, $w(1-w)$, is 0.18 for the 101+ cohort's dollars against 0.20 to 0.21 for the other cohorts, and 58\% of the heavy cohort's volume comes from homogeneous zip codes against 25 to 33\% elsewhere. Third, the PredictIt data, which records registered party, shows that the zip vote share is a calibrated proxy. The PredictIt data records the first three digits of each account's zip code, which we call the zip-code prefix. Among the 46,739 PredictIt accounts registered Democrat or Republican, a regression of the Republican indicator on the 2024 Republican vote share of the account's zip-code prefix has slope 1.03 (standard error 0.01) and intercept $-0.02$. Among accounts that traded the slope is 0.93 (standard error 0.04). Among PredictIt traders in zip-code prefixes that voted more than 60\% Republican, the average zip-code prefix voted 67.8\% Republican and 62.4\% of the traders registered with a major party are registered Republican. Among traders in zip-code prefixes that voted less than 40\% Republican, the average zip-code prefix voted 24.9\% Republican and 20.5\% are registered Republican. In both groups the share registered Republican is within six percentage points of the vote share.

\begin{table}[h!]
\centering
\caption{Attenuation check: direction of trades in homogeneous zip codes (2024 two-party Republican vote share above 75\%, ``Red'', or below 25\%, ``Blue''), by cohort. Direction is the share of classified election-market dollars committed long Republican. The raw gap compares Red-zip with Blue-zip residents directly. The attribution gaps apply the paper's fractional attribution to homogeneous-zip traders only and to all traders. Bootstrap $t$ over traders in parentheses.}
\label{tab:homog}
\footnotesize
\begin{tabular}{lrrrrrrr}
\toprule
 & \multicolumn{2}{c}{Users} & \multicolumn{2}{c}{\% long Republican} & Raw gap & Attribution gap & Attribution gap \\
Cohort & Red & Blue & Red & Blue & Red $-$ Blue & homogeneous zips & all traders \\
\midrule
=1      & 1,158 & 5,104  & 46.5 & 15.1 & $31.4$ ($7.6$)  & $12.9$ ($8.2$)  & $6.8$ ($10.0$) \\
2--10   & 6,708 & 18,335 & 75.7 & 38.6 & $37.2$ ($17.1$) & $17.4$ ($15.4$) & $6.9$ ($7.8$) \\
11--100 & 2,682 & 8,866  & 51.2 & 44.2 & $7.1$ ($1.1$)   & $4.5$ ($2.1$)   & $4.5$ ($4.3$) \\
101+    & 441   & 2,417  & 61.9 & 64.3 & $-2.5$ ($-0.4$) & $-1.3$ ($-0.7$) & $-0.2$ ($-0.1$) \\
\bottomrule
\end{tabular}
\end{table}

\bibliography{references}

\end{spacing}
\end{document}